\newtheorem{theorem}{Theorem}[section]
\newtheorem{proposition}[theorem]{Proposition}
\newtheorem{lemma}[theorem]{Lemma}
\newtheorem{corollary}[theorem]{Corollary}
\newtheorem{remark}[theorem]{Remark}
\newtheorem{example}[theorem]{Example}
\newtheorem{definition}[theorem]{Definition}
\newtheorem{assumption}[theorem]{Assumption}
\newtheorem{problem}{Problem}
\chardef\@x10\chardef\@xv60
\def\tcitime{
\def\@time{%
  \@minute\time\@hour\@minute\divide\@hour\@xv
  \ifnum\@hour<\@x 0\fi\the\@hour:%
  \multiply\@hour\@xv\advance\@minute-\@hour
  \ifnum\@minute<\@x 0\fi\the\@minute
  }}%
\def\QCTOpt[#1]#2{%
  \def\QCTOptB{#1}
  \def\QCTOptA{#2}
}
\def\QCTNOpt#1{%
  \def\QCTOptA{#1}
  \let\QCTOptB\empty
}
\def\Qct{%
  \@ifnextchar[{%
    \QCTOpt}{\QCTNOpt}
}
\def\QCBOpt[#1]#2{%
  \def\QCBOptB{#1}
  \def\QCBOptA{#2}
}
\def\QCBNOpt#1{%
  \def\QCBOptA{#1}
  \let\QCBOptB\empty
}
\def\Qcb{%
  \@ifnextchar[{%
    \QCBOpt}{\QCBNOpt}
}
\def\PrepCapArgs{%
  \ifx\QCBOptA\empty
    \ifx\QCTOptA\empty
      {}%
    \else
      \ifx\QCTOptB\empty
        {\QCTOptA}%
      \else
        [\QCTOptB]{\QCTOptA}%
      \fi
    \fi
  \else
    \ifx\QCBOptA\empty
      {}%
    \else
      \ifx\QCBOptB\empty
        {\QCBOptA}%
      \else
        [\QCBOptB]{\QCBOptA}%
      \fi
    \fi
  \fi
}
\def\GRAPHICSPS#1{%
 \ifcase\GRAPHICSTYPE
   \special{ps: #1}%
 \or
   \special{language "PS", include "#1"}%
 \fi
}%
\def\graffile#1#2#3#4{%
    \leavevmode
    \raise -#4 \BOXTHEFRAME{%
        \hbox to #2{\raise #3\hbox to #2{\null #1\hfil}}}%
}%
\def\draftbox#1#2#3#4{%
 \leavevmode\raise -#4 \hbox{%
  \frame{\rlap{\protect\tiny #1}\hbox to #2%
   {\vrule height#3 width\z@ depth\z@\hfil}%
  }%
 }%
}%
\newif\ifwasdraft
\def\GRAPHIC#1#2#3#4#5{%
 \ifnum\draft=\@ne\draftbox{#2}{#3}{#4}{#5}%
  \else\graffile{#1}{#3}{#4}{#5}%
  \fi
 }%
\def\addtoLaTeXparams#1{%
    \edef\LaTeXparams{\LaTeXparams #1}}%
\newif\ifBoxFrame \BoxFramefalse
\newif\ifOverFrame \OverFramefalse
\newif\ifUnderFrame \UnderFramefalse
\def\BOXTHEFRAME#1{%
   \hbox{%
      \ifBoxFrame
         \frame{#1}%
      \else
         {#1}%
      \fi
   }%
}
\def\doFRAMEparams#1{\BoxFramefalse\OverFramefalse\UnderFramefalse\readFRAMEparams#1\end}%
\def\readFRAMEparams#1{%
 \ifx#1\end%
  \let\next=\relax
  \else
  \ifx#1i\dispkind=\z@\fi
  \ifx#1d\dispkind=\@ne\fi
  \ifx#1f\dispkind=\tw@\fi
  \ifx#1t\addtoLaTeXparams{t}\fi
  \ifx#1b\addtoLaTeXparams{b}\fi
  \ifx#1p\addtoLaTeXparams{p}\fi
  \ifx#1h\addtoLaTeXparams{h}\fi
  \ifx#1X\BoxFrametrue\fi
  \ifx#1O\OverFrametrue\fi
  \ifx#1U\UnderFrametrue\fi
  \ifx#1w
    \ifnum\draft=1\wasdrafttrue\else\wasdraftfalse\fi
    \draft=\@ne
  \fi
  \let\next=\readFRAMEparams
  \fi
 \next
 }%
\def\IFRAME#1#2#3#4#5#6{%
      \bgroup
      \let\QCTOptA\empty
      \let\QCTOptB\empty
      \let\QCBOptA\empty
      \let\QCBOptB\empty
      #6%
      \parindent=0pt%
      \leftskip=0pt
      \rightskip=0pt
      \setbox0 = \hbox{\QCBOptA}%
      \@tempdima = #1\relax
      \ifOverFrame
          \typeout{This is not implemented yet}%
          \show\HELP
      \else
         \ifdim\wd0>\@tempdima
            \advance\@tempdima by \@tempdima
            \ifdim\wd0 >\@tempdima
               \textwidth=\@tempdima
               \setbox1 =\vbox{%
                  \noindent\hbox to \@tempdima{\hfill\GRAPHIC{#5}{#4}{#1}{#2}{#3}\hfill}\\%
                  \noindent\hbox to \@tempdima{\parbox[b]{\@tempdima}{\QCBOptA}}%
               }%
               \wd1=\@tempdima
            \else
               \textwidth=\wd0
               \setbox1 =\vbox{%
                 \noindent\hbox to \wd0{\hfill\GRAPHIC{#5}{#4}{#1}{#2}{#3}\hfill}\\%
                 \noindent\hbox{\QCBOptA}%
               }%
               \wd1=\wd0
            \fi
         \else
            \ifdim\wd0>0pt
              \hsize=\@tempdima
              \setbox1 =\vbox{%
                \unskip\GRAPHIC{#5}{#4}{#1}{#2}{0pt}%
                \break
                \unskip\hbox to \@tempdima{\hfill \QCBOptA\hfill}%
              }%
              \wd1=\@tempdima
           \else
              \hsize=\@tempdima
              \setbox1 =\vbox{%
                \unskip\GRAPHIC{#5}{#4}{#1}{#2}{0pt}%
              }%
              \wd1=\@tempdima
           \fi
         \fi
         \@tempdimb=\ht1
         \advance\@tempdimb by \dp1
         \advance\@tempdimb by -#2%
         \advance\@tempdimb by #3%
         \leavevmode
         \raise -\@tempdimb \hbox{\box1}%
      \fi
      \egroup%
}%
\def\DFRAME#1#2#3#4#5{%
 \begin{center}
     \let\QCTOptA\empty
     \let\QCTOptB\empty
     \let\QCBOptA\empty
     \let\QCBOptB\empty
     \ifOverFrame 
        #5\QCTOptA\par
     \fi
     \GRAPHIC{#4}{#3}{#1}{#2}{\z@}
     \ifUnderFrame 
        \nobreak\par #5\QCBOptA
     \fi
 \end{center}%
 }%
\def\FFRAME#1#2#3#4#5#6#7{%
 \begin{figure}[#1]%
  \let\QCTOptA\empty
  \let\QCTOptB\empty
  \let\QCBOptA\empty
  \let\QCBOptB\empty
  \ifOverFrame
    #4
    \ifx\QCTOptA\empty
    \else
      \ifx\QCTOptB\empty
        \caption{\QCTOptA}%
      \else
        \caption[\QCTOptB]{\QCTOptA}%
      \fi
    \fi
    \ifUnderFrame\else
      \label{#5}%
    \fi
  \else
    \UnderFrametrue%
  \fi
  \begin{center}\GRAPHIC{#7}{#6}{#2}{#3}{\z@}\end{center}%
  \ifUnderFrame
    #4
    \ifx\QCBOptA\empty
      \caption{}%
    \else
      \ifx\QCBOptB\empty
        \caption{\QCBOptA}%
      \else
        \caption[\QCBOptB]{\QCBOptA}%
      \fi
    \fi
    \label{#5}%
  \fi
  \end{figure}%
 }%
\def\makeactives{
  \catcode`\"=\active
  \catcode`\;=\active
  \catcode`\:=\active
  \catcode`\'=\active
  \catcode`\~=\active
}
   \gdef\activesoff{%
      \def"{\string"}
      \def;{\string;}
      \def:{\string:}
      \def'{\string'}
      \def~{\string~}
    }
\def\FRAME#1#2#3#4#5#6#7#8{%
 \bgroup
 \@ifundefined{bbl@deactivate}{}{\activesoff}
 \ifnum\draft=\@ne
   \wasdrafttrue
 \else
   \wasdraftfalse%
 \fi
 \def\LaTeXparams{}%
 \dispkind=\z@
 \def\LaTeXparams{}%
 \doFRAMEparams{#1}%
 \ifnum\dispkind=\z@\IFRAME{#2}{#3}{#4}{#7}{#8}{#5}\else
  \ifnum\dispkind=\@ne\DFRAME{#2}{#3}{#7}{#8}{#5}\else
   \ifnum\dispkind=\tw@
    \edef\@tempa{\noexpand\FFRAME{\LaTeXparams}}%
    \@tempa{#2}{#3}{#5}{#6}{#7}{#8}%
    \fi
   \fi
  \fi
  \ifwasdraft\draft=1\else\draft=0\fi{}%
  \egroup
 }%
\def\TEXUX#1{"texux"}
\long\def\QQQ#1#2{%
     \long\expandafter\def\csname#1\endcsname{#2}}%
\long\def\QQA#1#2{}%
\def\QTR#1#2{{\csname#1\endcsname #2}}
\def\EXPAND#1[#2]#3{}%
\def\NOEXPAND#1[#2]#3{}%
\def\LaTeXparent#1{}%
\def\ChildStyles#1{}%
\def\ChildDefaults#1{}%
\def\QTagDef#1#2#3{}%
\def\QQfnmark#1{\footnotemark}
\def\makeatletter\input gnuindex.sty\makeatother\makeindex{\makeatletter\input gnuindex.sty\makeatother\makeindex}%
\def\initial#1{\bigbreak{\raggedright\large\bf #1}\kern 2\p@\penalty3000}}%
 \def\abstract{%
  \if@twocolumn
   \section*{Abstract (Not appropriate in this style!)}%
   \else \small 
   \begin{center}{\bf Abstract\vspace{-.5em}\vspace{\z@}}\end{center}%
   \quotation 
   \fi
  }%
   \def\registered{\relax\ifmmode{}\r@gistered
                    \else$\m@th\r@gistered$\fi}%
 \def\r@gistered{^{\ooalign
  {\hfil\raise.07ex\hbox{$\scriptstyle\rm\text{R}$}\hfil\crcr
  \mathhexbox20D}}}}{}%
\newdimen\theight
\def\Column{%
 \vadjust{\setbox\z@=\hbox{\scriptsize\quad\quad tcol}%
  \theight=\ht\z@\advance\theight by \dp\z@\advance\theight by \lineskip
  \kern -\theight \vbox to \theight{%
   \rightline{\rlap{\box\z@}}%
   \vss
   }%
  }%
 }%
\def\qed{%
 \ifhmode\unskip\nobreak\fi\ifmmode\ifinner\else\hskip5\p@\fi\fi
 \hbox{\hskip5\p@\vrule width4\p@ height6\p@ depth1.5\p@\hskip\p@}%
 }%
\def\miss{\hbox{\vrule height2\p@ width 2\p@ depth\z@}}%
\def\tcol#1{{\baselineskip=6\p@ \vcenter{#1}} \Column}  %
\def\newfmtname{LaTeX2e}
\def\chkcompat{%
   \if@compatibility
   \else
     \usepackage{latexsym}
   \fi
}
  \DeclareOldFontCommand{\rm}{\normalfont\rmfamily}{\mathrm}
  \DeclareOldFontCommand{\sf}{\normalfont\sffamily}{\mathsf}
  \DeclareOldFontCommand{\tt}{\normalfont\ttfamily}{\mathtt}
  \DeclareOldFontCommand{\bf}{\normalfont\bfseries}{\mathbf}
  \DeclareOldFontCommand{\it}{\normalfont\itshape}{\mathit}
  \DeclareOldFontCommand{\sl}{\normalfont\slshape}{\@nomath\sl}
  \DeclareOldFontCommand{\sc}{\normalfont\scshape}{\@nomath\sc}
\def\alpha{{\Greekmath 010B}}%
\def\beta{{\Greekmath 010C}}%
\def\gamma{{\Greekmath 010D}}%
\def\delta{{\Greekmath 010E}}%
\def\epsilon{{\Greekmath 010F}}%
\def\zeta{{\Greekmath 0110}}%
\def\eta{{\Greekmath 0111}}%
\def\theta{{\Greekmath 0112}}%
\def\iota{{\Greekmath 0113}}%
\def\kappa{{\Greekmath 0114}}%
\def\lambda{{\Greekmath 0115}}%
\def\mu{{\Greekmath 0116}}%
\def\nu{{\Greekmath 0117}}%
\def\xi{{\Greekmath 0118}}%
\def\pi{{\Greekmath 0119}}%
\def\rho{{\Greekmath 011A}}%
\def\sigma{{\Greekmath 011B}}%
\def\tau{{\Greekmath 011C}}%
\def\upsilon{{\Greekmath 011D}}%
\def\phi{{\Greekmath 011E}}%
\def\chi{{\Greekmath 011F}}%
\def\psi{{\Greekmath 0120}}%
\def\omega{{\Greekmath 0121}}%
\def\varepsilon{{\Greekmath 0122}}%
\def\vartheta{{\Greekmath 0123}}%
\def\varpi{{\Greekmath 0124}}%
\def\varrho{{\Greekmath 0125}}%
\def\varsigma{{\Greekmath 0126}}%
\def\varphi{{\Greekmath 0127}}%
\def\nabla{{\Greekmath 0272}}
\def\FindBoldGroup{%
   {\setbox0=\hbox{$\mathbf{x\global\edef\theboldgroup{\the\mathgroup}}$}}%
}
\def\Greekmath#1#2#3#4{%
    \if@compatibility
        \ifnum\mathgroup=\symbold
           \mathchoice{\mbox{\boldmath$\displaystyle\mathchar"#1#2#3#4$}}%
                      {\mbox{\boldmath$\textstyle\mathchar"#1#2#3#4$}}%
                      {\mbox{\boldmath$\scriptstyle\mathchar"#1#2#3#4$}}%
                      {\mbox{\boldmath$\scriptscriptstyle\mathchar"#1#2#3#4$}}%
        \else
           \mathchar"#1#2#3#4%
        \fi 
    \else 
        \FindBoldGroup
        \ifnum\mathgroup=\theboldgroup 
           \mathchoice{\mbox{\boldmath$\displaystyle\mathchar"#1#2#3#4$}}%
                      {\mbox{\boldmath$\textstyle\mathchar"#1#2#3#4$}}%
                      {\mbox{\boldmath$\scriptstyle\mathchar"#1#2#3#4$}}%
                      {\mbox{\boldmath$\scriptscriptstyle\mathchar"#1#2#3#4$}}%
        \else
           \mathchar"#1#2#3#4%
        \fi     	    
	  \fi}
\newif\ifGreekBold  \GreekBoldfalse
\let\SAVEPBF=\pbf
\def\pbf{\GreekBoldtrue\SAVEPBF}%
  \newcounter{equationnumber}  
  \def\mathletters{%
     \addtocounter{equation}{1}
     \edef\@currentlabel{\theequation}%
     \setcounter{equationnumber}{\c@equation}
     \setcounter{equation}{0}%
     \edef\theequation{\@currentlabel\noexpand\alph{equation}}%
  }
    \def\BibTeX{{\rm B\kern-.05em{\sc i\kern-.025em b}\kern-.08em
                 T\kern-.1667em\lower.7ex\hbox{E}\kern-.125emX}}}{}%
\def\AmS{{\protect\usefont{OMS}{cmsy}{m}{n}%
                A\kern-.1667em\lower.5ex\hbox{M}\kern-.125emS}}}{}%
\let\DOTSI\relax
\def\RIfM@{\relax\ifmmode}%
\def\FN@{\futurelet\next}%
\def\iint{\DOTSI\intno@\tw@\FN@\ints@}%
\def\iiint{\DOTSI\intno@\thr@@\FN@\ints@}%
\def\iiiint{\DOTSI\intno@4 \FN@\ints@}%
\def\idotsint{\DOTSI\intno@\z@\FN@\ints@}%
\def\ints@{\findlimits@\ints@@}%
\newif\iflimtoken@
\newif\iflimits@
\def\findlimits@{\limtoken@true\ifx\next\limits\limits@true
 \else\ifx\next\nolimits\limits@false\else
 \limtoken@false\ifx\ilimits@\nolimits\limits@false\else
 \ifinner\limits@false\else\limits@true\fi\fi\fi\fi}%
\def\multint@{\int\ifnum\intno@=\z@\intdots@                          
 \else\intkern@\fi                                                    
 \ifnum\intno@>\tw@\int\intkern@\fi                                   
 \ifnum\intno@>\thr@@\int\intkern@\fi                                 
 \int}
\def\multintlimits@{\intop\ifnum\intno@=\z@\intdots@\else\intkern@\fi
 \ifnum\intno@>\tw@\intop\intkern@\fi
 \ifnum\intno@>\thr@@\intop\intkern@\fi\intop}%
\def\intic@{%
    \mathchoice{\hskip.5em}{\hskip.4em}{\hskip.4em}{\hskip.4em}}%
\def\negintic@{\mathchoice
 {\hskip-.5em}{\hskip-.4em}{\hskip-.4em}{\hskip-.4em}}%
\def\ints@@{\iflimtoken@                                              
 \def\ints@@@{\iflimits@\negintic@
   \mathop{\intic@\multintlimits@}\limits                             
  \else\multint@\nolimits\fi                                          
  \eat@}
 \else                                                                
 \def\ints@@@{\iflimits@\negintic@
  \mathop{\intic@\multintlimits@}\limits\else
  \multint@\nolimits\fi}\fi\ints@@@}%
\def\intkern@{\mathchoice{\!\!\!}{\!\!}{\!\!}{\!\!}}%
\def\plaincdots@{\mathinner{\cdotp\cdotp\cdotp}}%
\def\intdots@{\mathchoice{\plaincdots@}%
 {{\cdotp}\mkern1.5mu{\cdotp}\mkern1.5mu{\cdotp}}%
 {{\cdotp}\mkern1mu{\cdotp}\mkern1mu{\cdotp}}%
 {{\cdotp}\mkern1mu{\cdotp}\mkern1mu{\cdotp}}}%
\def\RIfM@{\relax\protect\ifmmode}
\def\text{\RIfM@\expandafter\text@\else\expandafter\mbox\fi}
\let\nfss@text\text
\def\text@#1{\mathchoice
   {\textdef@\displaystyle\f@size{#1}}%
   {\textdef@\textstyle\tf@size{\firstchoice@false #1}}%
   {\textdef@\textstyle\sf@size{\firstchoice@false #1}}%
   {\textdef@\textstyle \ssf@size{\firstchoice@false #1}}%
   \glb@settings}
\def\textdef@#1#2#3{\hbox{{%
                    \everymath{#1}%
                    \let\f@size#2\selectfont
                    #3}}}
\newif\iffirstchoice@
\def\Let@{\relax\iffalse{\fi\let\\=\cr\iffalse}\fi}%
\def\vspace@{\def\vspace##1{\crcr\noalign{\vskip##1\relax}}}%
\def\multilimits@{\bgroup\vspace@\Let@
 \baselineskip\fontdimen10 \scriptfont\tw@
 \advance\baselineskip\fontdimen12 \scriptfont\tw@
 \lineskip\thr@@\fontdimen8 \scriptfont\thr@@
 \lineskiplimit\lineskip
 \vbox\bgroup\ialign\bgroup\hfil$\m@th\scriptstyle{##}$\hfil\crcr}%
\def\Sb{_\multilimits@}%
\def\endSb{\crcr\egroup\egroup\egroup}%
\def\Sp{^\multilimits@}%
\newdimen\ex@
\def\rightarrowfill@#1{$#1\m@th\mathord-\mkern-6mu\cleaders
 \hbox{$#1\mkern-2mu\mathord-\mkern-2mu$}\hfill
 \mkern-6mu\mathord\rightarrow$}%
\def\leftarrowfill@#1{$#1\m@th\mathord\leftarrow\mkern-6mu\cleaders
 \hbox{$#1\mkern-2mu\mathord-\mkern-2mu$}\hfill\mkern-6mu\mathord-$}%
\def\leftrightarrowfill@#1{$#1\m@th\mathord\leftarrow
\mkern-6mu\cleaders
 \hbox{$#1\mkern-2mu\mathord-\mkern-2mu$}\hfill
 \mkern-6mu\mathord\rightarrow$}%
\def\overrightarrow{\mathpalette\overrightarrow@}%
\def\overrightarrow@#1#2{\vbox{\ialign{##\crcr\rightarrowfill@#1\crcr
 \noalign{\kern-\ex@\nointerlineskip}$\m@th\hfil#1#2\hfil$\crcr}}}%
\def\overleftarrow{\mathpalette\overleftarrow@}%
\def\overleftarrow@#1#2{\vbox{\ialign{##\crcr\leftarrowfill@#1\crcr
 \noalign{\kern-\ex@\nointerlineskip}$\m@th\hfil#1#2\hfil$\crcr}}}%
\def\overleftrightarrow{\mathpalette\overleftrightarrow@}%
\def\overleftrightarrow@#1#2{\vbox{\ialign{##\crcr
   \leftrightarrowfill@#1\crcr
 \noalign{\kern-\ex@\nointerlineskip}$\m@th\hfil#1#2\hfil$\crcr}}}%
\def\underrightarrow{\mathpalette\underrightarrow@}%
\def\underrightarrow@#1#2{\vtop{\ialign{##\crcr$\m@th\hfil#1#2\hfil
  $\crcr\noalign{\nointerlineskip}\rightarrowfill@#1\crcr}}}%
\def\underleftarrow{\mathpalette\underleftarrow@}%
\def\underleftarrow@#1#2{\vtop{\ialign{##\crcr$\m@th\hfil#1#2\hfil
  $\crcr\noalign{\nointerlineskip}\leftarrowfill@#1\crcr}}}%
\def\underleftrightarrow{\mathpalette\underleftrightarrow@}%
\def\underleftrightarrow@#1#2{\vtop{\ialign{##\crcr$\m@th
  \hfil#1#2\hfil$\crcr
 \noalign{\nointerlineskip}\leftrightarrowfill@#1\crcr}}}%
\def\qopnamewl@#1{\mathop{\operator@font#1}\nlimits@}
\let\nlimits@\displaylimits
\def\setboxz@h{\setbox\z@\hbox}
\def\varlim@#1#2{\mathop{\vtop{\ialign{##\crcr
 \hfil$#1\m@th\operator@font lim$\hfil\crcr
 \noalign{\nointerlineskip}#2#1\crcr
 \noalign{\nointerlineskip\kern-\ex@}\crcr}}}}
 \def\rightarrowfill@#1{\m@th\setboxz@h{$#1-$}\ht\z@\z@
  $#1\copy\z@\mkern-6mu\cleaders
  \hbox{$#1\mkern-2mu\box\z@\mkern-2mu$}\hfill
  \mkern-6mu\mathord\rightarrow$}
\def\leftarrowfill@#1{\m@th\setboxz@h{$#1-$}\ht\z@\z@
  $#1\mathord\leftarrow\mkern-6mu\cleaders
  \hbox{$#1\mkern-2mu\copy\z@\mkern-2mu$}\hfill
  \mkern-6mu\box\z@$}
\def\projlim{\qopnamewl@{proj\,lim}}
\def\injlim{\qopnamewl@{inj\,lim}}
\def\varinjlim{\mathpalette\varlim@\rightarrowfill@}
\def\varprojlim{\mathpalette\varlim@\leftarrowfill@}
\def\varliminf{\mathpalette\varliminf@{}}
\def\varliminf@#1{\mathop{\underline{\vrule\@depth.2\ex@\@width\z@
   \hbox{$#1\m@th\operator@font lim$}}}}
\def\varlimsup{\mathpalette\varlimsup@{}}
\def\varlimsup@#1{\mathop{\overline
  {\hbox{$#1\m@th\operator@font lim$}}}}
\def\align{\@verbatim \frenchspacing\@vobeyspaces \@alignverbatim
You are using the "align" environment in a style in which it is not defined.}
\let\csname endalign*\endcsname =\endtrivlist
\def\alignat{\@verbatim \frenchspacing\@vobeyspaces \@alignatverbatim
You are using the "alignat" environment in a style in which it is not defined.}
\let\csname endalignat*\endcsname =\endtrivlist
\def\xalignat{\@verbatim \frenchspacing\@vobeyspaces \@xalignatverbatim
You are using the "xalignat" environment in a style in which it is not defined.}
\let\csname endxalignat*\endcsname =\endtrivlist
\def\gather{\@verbatim \frenchspacing\@vobeyspaces \@gatherverbatim
You are using the "gather" environment in a style in which it is not defined.}
\let\csname endgather*\endcsname =\endtrivlist
\def\multiline{\@verbatim \frenchspacing\@vobeyspaces \@multilineverbatim
You are using the "multiline" environment in a style in which it is not defined.}
\let\csname endmultiline*\endcsname =\endtrivlist
\def\arrax{\@verbatim \frenchspacing\@vobeyspaces \@arraxverbatim
You are using a type of "array" construct that is only allowed in AmS-LaTeX.}
\def\tabulax{\@verbatim \frenchspacing\@vobeyspaces \@tabulaxverbatim
You are using a type of "tabular" construct that is only allowed in AmS-LaTeX.}
\let\csname endarrax*\endcsname =\endtrivlist
\let\csname endtabulax*\endcsname =\endtrivlist
\def\@@eqncr{\let\@tempa\relax
    \ifcase\@eqcnt \def\@tempa{& & &}\or \def\@tempa{& &}%
      \else \def\@tempa{&}\fi
     \@tempa
     \if@eqnsw
        \iftag@
           \@taggnum
        \else
           \@eqnnum\stepcounter{equation}%
        \fi
     \fi
     \global\tag@false
     \global\@eqnswtrue
     \global\@eqcnt\z@\cr}
 \def\endequation{%
     \ifmmode\ifinner 
      \iftag@
        \addtocounter{equation}{-1} 
        $\hfil
           \displaywidth\linewidth\@taggnum\egroup \endtrivlist
        \global\tag@false
        \global\@ignoretrue   
      \else
        $\hfil
           \displaywidth\linewidth\@eqnnum\egroup \endtrivlist
        \global\tag@false
        \global\@ignoretrue 
      \fi
     \else   
      \iftag@
        \addtocounter{equation}{-1} 
        \eqno \hbox{\@taggnum}
        \global\tag@false%
        $$\global\@ignoretrue
      \else
        \eqno \hbox{\@eqnnum}
        $$\global\@ignoretrue
      \fi
     \fi\fi
 } 
 \newif\iftag@ \tag@false
 \def\tag{\@ifnextchar*{\@tagstar}{\@tag}}
 \def\@tag#1{%
     \global\tag@true
     \global\def\@taggnum{(#1)}}
 \def\@tagstar*#1{%
     \global\tag@true
     \global\def\@taggnum{#1}%
}
\newcommand{\longthmtitle}[1]{\mbox{}\textit{(#1):}}
\newcommand{\real}{\ensuremath{\mathbb{R}}}
\newcommand{\realnonneg}{\ensuremath{\mathbb{R}_{\ge 0}}}
\newcommand{\intnonneg}{{\mathbb{Z}}_{\ge 0}}
\DeclareMathOperator*{\esssup}{\text{ess}\sup}
\newcommand{\map}[3]{#1: #2 \rightarrow #3}
\newcommand{\setdef}[2]{\{#1 \, | \, #2\}}
\newcommand{\setdefb}[2]{\big\{#1 \; | \; #2\big\}}
\newcommand{\Gc}{\mathcal{G}}
\newcommand{\Kc}{\mathcal{K}}
\newcommand{\Lc}{\mathcal{L}}
\newcommand{\Nc}{\mathcal{N}}
\newcommand{\var}{\text{var}}
\newcommand{\oprocendsymbol}{\hbox{$\bullet$}}
\newcommand{\oprocend}{\relax\ifmmode\else\unskip\hfill\fi\oprocendsymbol}
\newcommand\new[1]{{\color{blue} #1}}
\renewcommand\new[1]{{#1}}
\newcommand\nnew[1]{{\color{blue} #1}}
\renewcommand\nnew[1]{{#1}}
\begin{document}

\begin{frontmatter}

 \title{Event-Triggered Stabilization of Nonlinear Systems with Time-Varying Sensing and Actuation Delay}

 \thanks{A preliminary version of this paper appeared at the IEEE
   Conference on Decision and Control as~\citep{EN-PT-JC:16-cdc}.}
    
    \vspace{-10pt}
  
  \author[First]{Erfan Nozari}%
  \author[Second]{\quad Pavankumar Tallapragada}%
  \author[Third]{\quad Jorge Cort\'es}
  
    \address[First]{Department of Electrical and Systems Engineering, University of Pennsylvania, enozari@seas.upenn.edu}
    \address[Second]{Department of Electrical
    Engineering, Indian Institute of Science, pavant@ee.iisc.ernet.in}
    \address[Third]{Department of Mechanical and Aerospace Engineering,
    University of California, San Diego,
   cortes@ucsd.edu}
  
\begin{abstract}
  This paper studies the problem of stabilization of a nonlinear
  system with time-varying delays in both sensing and actuation using
  event-triggered control. Our proposed strategy seeks to
  opportunistically minimize the number of control updates while
    guaranteeing stabilization and builds on predictor feedback to
  compensate for arbitrarily large known time-varying delays. We
  establish, using a Lyapunov approach, the global asymptotic
  stability of the closed-loop system as long as the open-loop system
  is globally input-to-state stabilizable in the absence of time
  delays and \nnew{sampling}. We further prove that the proposed
  event-triggered law has inter-event times that are uniformly lower
  bounded and hence does not exhibit Zeno behavior. For the particular
  case of a stabilizable linear system, we show global exponential
  stability of the closed-loop system and analyze the trade-off
  between the rate of exponential convergence and \nnew{a bound on the
  sampling frequency}. We illustrate these results in simulation
  and also examine the properties of the proposed event-triggered
  strategy beyond the class of systems for which stabilization can be
  guaranteed.
\end{abstract}
  
  
\end{frontmatter}

\section{Introduction}\label{sec:intro}

Event- and self-triggered approaches have recently gained popularity
for controlling cyberphysical systems. The basic premise is that of
abandoning the assumption of continuous or periodic updating of the
control signal and instead adopt an opportunistic perspective that
leads to deliberate, aperiodic updates. The challenge resides in
determining precisely when control signals should be updated to
improve efficiency while still guaranteeing convergence.  This paper
expands the state-of-the-art in resource-aware control by designing
predictor-based event-triggered control strategies that stabilize
nonlinear systems with \emph{known} delays in both sensing and
actuation that can be \emph{arbitrarily large} and
\emph{time-varying}.

\emph{Literature review:} There exists a vast literature on both
event-triggered control and the control of time-delay systems. Here,
we review the works most closely related to our treatment. Originating
from event-based and discrete-event
systems~\citep{CGC-SL:07,LZ-ZDW-DHZ:17},
the concept of event-triggered control (i.e., updating the control
signal in an opportunistic fashion) was proposed
in~\citep{HK:91,KJA-BMB:02} and has found its way into the efficient
use of sensing, computing, actuation, and communication resources in
networked control systems, see
  e.g.,~\citep{PT:07,XW-MDL:11,WPMHH-KHJ-PT:12,MA-RP-JD-DN:17} and
references therein.
On the other hand, the notion of predictor feedback
is a powerful method in dealing with controlled systems subject to
time delay~\citep{OJMS:59,DQM:68,AM-AO:79,MTN:91,MK:09,IK-MK:12}. In
essence, a predictor feedback controller anticipates the future
evolution of the plant using its forward model and sends the control
signal early enough to compensate for the delay.  Here, we pursue a
Lyapunov-based analysis of predictor feedback
following~\citep{NBL-MK:13}.  Given that numerical implementations of
predictor feedback controllers are particularly
challenging~\citep{LM:04,QCZ:04}, we further discuss several methods
for the implementation of our proposed controller and show that a
carefully designed ``closed-loop'' method is numerically stable and
robust to errors in delay compensation.

The joint treatment of time delay and event-triggering is particularly
challenging. 
By its opportunistic nature, an event-triggered controller keeps the
control value unchanged until the plant is close to instability and
then updates the control value according to the current state. Now, if
time delays exist, the controller only has access to some past state
of the plant (delayed sensing) and it takes some time for an updated
control action to reach the plant (delayed actuation), jointly
increasing the possibility of the updated control value being already
obsolete when it is implemented in the plant, resulting in
instability. Therefore, the controller needs to be sufficiently
proactive and update the control value sufficiently ahead of time to
maintain closed-loop stability. This makes the design problem
challenging.  Delays in actuation and sensing may be due to
  communication delays between controller-actuator and
  controller-sensor pairs, and in that sense, previous work on the
  event-triggered control literature that specifically considers
  delays in the communication channel deals with a similar problem
  setup as the one considered here. Several event-triggered designs
  consider scenarios where the system dynamics are linear, see,
  \new{e.g.~\citep{XMZ-QLH-BLZ:17,JC-SM-JS:17,AS-EF:16-aut,AS-EF:16-aut2,XG-QLH:15,EG-PJA:13}.} The
  inclusion of nonlinearity, however, makes the problem more
  challenging.
When digital controllers are used and the delay is smaller than the
sampling time,~\citep{LH-JD-CI:06,WW-SR-DG-SL:15} design
event-triggered controllers for the resulting delay-free discretized
system.
Robust event-triggered stabilizing controllers are also designed
  for nonlinear systems with sensing delays in~\citep{LL-XW-MDL:12a}
  and with both sensing and actuation delays
  \new{in~\citep{PT:07,VSD-DPB-WPMHH:17}}. In all these
  works, 
  however, a key assumption is that the (maximum) delay is smaller
  than the (minimum) inter-transmission time. This assumption (also
  called the small-delay case) allows for the \emph{treatment of delay
    as a disturbance} and, by construction, can tolerate unknown
  delays. In reality, however, (minimum) inter-transmission times can
  be very small, making this assumption restrictive.  Similar to
    our preliminary work~\citep{EN-PT-JC:16-cdc}, we take a different
  perspective here and consider arbitrarily large delays, with the
  expected tradeoff in our treatment that the delay can no longer be
  unknown. The technical approach is based on using predictors that
  capture the effect of the delay on the system to compensate for
  it. We rigorously analyze the case when the delay is accurately
  known and show in simulation that our design is indeed robust to
  small variations when the delay is only approximately known.
  \nnew{Unlike~\citep{EN-PT-JC:16-cdc}, here we consider 
  event-triggering and time-varying delay
    both in sensing and actuation.} Further, given the well-known
    difficulties in the computation of predictor-feedback controllers,
    we here provide a detailed discussion of the numerical challenges
    that arise in the implementation of predictor feedback and
    effective solutions to resolve them. Finally, this paper provides
    a complete and thorough technical treatment, including the proofs
    of all results, which are not available
    in~\citep{EN-PT-JC:16-cdc}.

\emph{Statement of contributions:} Our contributions are
threefold. First, we design an event-triggered controller for
stabilization of
nonlinear systems with arbitrarily large sensing and actuation
delays. We employ the method of predictor feedback to compensate for
the delay in both and then co-design the control law and triggering
strategy to guarantee the monotonic decay of a Lyapunov-Krasovskii
functional. Our second contribution involves the closed-loop analysis
of the event-triggered law, proving that the closed-loop system is
globally asymptotically stable and the inter-event times are uniformly
lower bounded (and thus no Zeno behavior may exist). Due to the
importance of linear systems in numerous applications, we briefly
discuss the simplifications of the design and analysis in this
case. Our final contribution pertains to the trade-off between
convergence rate and sampling. Our analysis in this part is
limited to linear systems, where closed-form solutions are derivable
for (exponential) convergence rate and minimum inter-event
times. We provide a quantitative account of the well-known
  trade-off between sampling and convergence in event-triggered
  designs and show how this trade-off can be biased in either
  direction by tuning a design parameter. Finally, we present
simulations to illustrate the effectiveness of our design and
  address its numerical implementation.

\section{Preliminaries}\label{sec:prelims}

We introduce notational conventions and briefly review notions on
input-to-state stability.  We denote by $\real$ and $\realnonneg$ the
sets of reals and nonnegative reals, respectively. 
\new{Given $t \in \real$ and a function $f$ on $\real$, $t_+ \triangleq \max\{t, 0\}$ while $f(t^+) \triangleq \lim_{s \to t^+} f(s)$ and $f(t^-) \triangleq \lim_{s \to t^-} f(s)$ \nnew{when these limits exist}.}
Given a vector or
matrix, we use $|\cdot|$ to denote the Euclidean norm.  We denote by
$\Kc$ the set of strictly increasing continuous functions $\alpha: [0,
\infty) \to [0, \infty)$ with $\alpha(0) = 0$. $\alpha$ belongs to
$\Kc_\infty$ if $\alpha \in \Kc$ and $\lim_{r \to \infty} \alpha(r) =
\infty$. We denote by $\Kc \Lc$ the set of \new{continuous} functions $\beta:[0,
\infty) \times [0, \infty) \to [0, \infty)$ such that, for each $s \in
[0, \infty)$, $r \mapsto \beta(r,s)$ \new{belongs to class $\Kc$} 
and, for each $r \in [0, \infty)$, $s \mapsto
\beta(r,s)$ is monotonically decreasing with $\beta(r, s) \to 0$ as $s
\to \infty$.  We use the notation $\Lc_f S = \nabla S \cdot f$ for the
Lie derivative of a function $\map{S}{\real^n}{\real}$ along the
trajectories of a vector field~$f$ taking values in~$\real^n$.

We follow~\citep{EDS-YW:95} to review the definition of input-to-state
stability of nonlinear systems and its Lyapunov characterization.
Consider a nonlinear system of the form
\begin{align}\label{eq:iss-system}
  \nnew{\dot x(t) = f(x(t), u(t)), \qquad \text{a.a. } t \ge 0, \qquad x(0) = x_0,}
\end{align}
where $f: \real^n \times \real^m \to \real^n$ is continuously
differentiable, $f(0, 0) = 0$, and \nnew{``a.a." (almost all)}
  denotes the fact that $x$ may not be differentiable on a set of
  Lebesgue measure zero.
System~\eqref{eq:iss-system} is (globally) input-to-state
stable (ISS) if there exist $\alpha \in \Kc$ and $\beta \in \Kc\Lc$
such that for any measurable locally essentially bounded input
$u:\realnonneg \to \real^m$ and any initial condition $x(0) \in
\real^n$, its solution satisfies
\begin{align*}
  |x(t)| \le \beta(|x(0)|, t) + \alpha\big(\esssup\nolimits_{t \ge 0}
  |u(t)|\big),
\end{align*}
for all $t \ge 0$.  For this system, a continuously differentiable
function $S:\real^n \to \realnonneg$ is called an ISS-Lyapunov
function if
there exist $\alpha_1, \alpha_2, \gamma,
\rho \in \Kc_\infty$ such that
\begin{subequations}\label{eq:iss-lyap2}
  \begin{align}
    &\forall x \in \real^n && \alpha_1(|x|) \le S(x) \le
    \alpha_2(|x|),
    \\
  \label{eq:iss-lyap2a}  &\forall (x, u) \in \real^{n + m} && \Lc_f S(x, u) \le
    -\gamma(|x|) + \rho(|u|).
  \end{align}
\end{subequations}
According to~\citep[Theorem 1]{EDS-YW:95}, the
system~\eqref{eq:iss-system} is ISS if and only if it admits an
ISS-Lyapunov function.

\section{Problem Statement}\label{sec:prob-state}

Consider the nonlinear 
system (``plant'') with dynamics
\begin{align}\label{eq:dynamics}
  \nnew{\!\!\dot x(t) = f(x(t), u_p(t)), \qquad \text{a.a. } t \ge 0, \qquad x(0) = x_0,}
\end{align}
where $f:\real^n \times \real^m \to \real^n$. 
Our goal is to provide a state-feedback controller ensuring global
asymptotic stability under the following 
challenges:
\begin{enumerate}[wide]
\item \textbf{Actuation delay:} Let $u(t)$ be the control signal
  generated by the controller. Actuation delay is modeled as
  \begin{align}
    u_p(t) = u(\phi(t)), \quad t \ge 0,
  \end{align}
  where $t - \phi(t) > 0$ is the amount of time that it takes
  for a control action generated at time $\phi(t)$ to reach the
  plant/actuator.
  For instance, In the case of a constant actuation delay $D$, we have
  $\phi(t) = t - D$. \new{This delay further requires an initial value
    $\{u(t) \;|\; \phi(0) \le t < 0\}$ on the control input
    for~\eqref{eq:dynamics} to be well-defined.}
\item \textbf{Sensing delay:} \new{We allow the existence of a delay
    between the sensor and the controller such that at any time $t$,
    the controller may have access to $x(s), s \le \psi(t)$
    (alternatively, $x(t)$ takes $\psi^{-1}(t) - t$ seconds to reach
    the controller) for some delay function $\psi(t) \le t$.}
\item \textbf{Actuation event-triggering:} 
    We seek to design a controller that updates $u(t)$
  only at a sequence of discrete times $\{t_k\}_{k = 0}^\infty$,
  \begin{align}\label{eq:u1}
    u(t) = u(t_k), \quad t \in [t_k, t_{k + 1}), \quad k \ge 0.
  \end{align}
  \item \textbf{Sensing event-triggering:} \nnew{We further allow for
    the possibility that the event-triggering mechanism does not have
    access to the plant state at all times $t \in \realnonneg$, but
    only at some time instants denoted $\tau_\ell, \ell \in
    \intnonneg$.}%
  \footnote{\nnew{We note that this sampled sensing scheme is also
      called \emph{periodic event-triggered control}, even when the
      sampling times are not equally spaced. Nevertheless, we do not
      adopt this terminology here to avoid the latter
      interpretation.}}
  \new{In this case, we let for simplicity that $\tau_0 = 0$,
  $t_0 = \psi^{-1}(0)$, and $u(t)$ be arbitrarily set in $[0,
    t_0)$ as the controller has not received any state information
    yet.}
\end{enumerate}

In the sequel, we impose the following assumptions on the system dynamics.

\begin{assumption}\longthmtitle{Standing assumptions}\label{assum}
  \rm \begin{enumerate}
  \item\label{item:assum-f} $f$ is continuously differentiable, $f(0,
    0) = 0$, and~\eqref{eq:dynamics} is forward complete (does not
    exhibit finite escape time) for all initial conditions and bounded
    inputs;
  \item the initial control $\{u(t) \;|\; \phi(0) \le t < 0\}$ is
    given and continuously differentiable;
  \item\label{item:assum-c1} the delay function $\phi$ is continuously
    differentiable;
  \item\label{item:assum-time} the delay functions $\phi$ and $\psi$
    are monotonically increasing so the argument of $u(\phi(t))$ and
    $x(\psi(t))$ do not go back in time;
  \item\label{item:assum-iss} the origin of~\eqref{eq:dynamics} is
    robustly globally asymptotically stabilizable in the absence of
    delays and with continuous sensing and actuation. Formally, there
    exists a globally Lipschitz feedback law $K:\real^n \to \real^m$,
    $K(0) = 0$, that makes 
    \begin{align}\label{eq:iss-assum}
      \dot x(t) = f(x(t), K(x(t)) + w(t)),
    \end{align}
    ISS with respect to the additive input disturbance~$w$;
  \item\label{item:assum-phipsi} the delay function $\phi$ is known to
    the controller; on the other hand, $\psi$ need not be known a
    priori or \nnew{for all times}, but only a posteriori and at times when state
    is measured;
  \item\label{item:assum-bdd} the delay function $\phi$ and its
    derivative are bounded, i.e., there exist $M_0 > 0$, $M_1 \ge 1$,
    and $0 < m_2 \le 1$ such that
    \begin{align}\label{eq:bounds-delay}
      \hspace{-5pt} t - \phi(t) \le M_0 \ \ \text{and} \ \ m_2 \le
      \dot \phi(t) \le M_1, \ \ \forall t \ge 0;
    \end{align}
  \item\label{item:assum-psi} the sensing triggering times
    $\{\tau_\ell\}_{\ell = 0}^\infty$ are given (determined by the
    sensor independently of our design). In particular, the sensor
    ensures that $\{\nnew{\tau_\ell}\}_{\ell \ge 0} \cap [a, b]$ is finite for
    any $a, b < \infty$ (lack of Zeno
    behavior) \nnew{while $\{\tau_\ell\}_{\ell = 0}^\infty$ can be arbitrary otherwise}. 
    \oprocend
  \end{enumerate}
\end{assumption}

Assumption~\ref{assum}\ref{item:assum-f}-\ref{item:assum-time}
  are standard in predictor-based control of delay systems. 
  \nnew{In the case of digital communications, Assumption~\ref{assum}\ref{item:assum-time} requires the lack of packet reordering. Nevertheless, the nature of the control system is such that any $u(t_k)$ has become obsolete and can be safely discarded, should it arrive later than $u(t_i), i \ge k$. The same applies to $\{x(\tau_{\ell})\}_{\ell = 1}^\infty$. Thus, $\phi$ and $\psi$ can be, without loss of generality, replaced by a monotonically increasing upper bound if they are not so originally.}
 Assumption~\ref{assum}\ref{item:assum-f}, together
  with the piecewise-constant form of $u_p$,
   further ensures existence and
  uniqueness of solutions for~\eqref{eq:dynamics}.
  Assumption~\ref{assum}\ref{item:assum-iss} 
  is also
  standard in event-triggered control, \nnew{though not necessarily with the globally Lipschitz property assumed here}. This allows us to focus on the
  challenges that arise by time delays and event-triggered control.
  Further, the a priori knowledge of $\phi$ in
  Assumption~\ref{assum}\ref{item:assum-phipsi} is most realistic in
  applications where the same control task is repeatedly executed and
  thus a data-driven estimate of future $\phi$ can be computed using
  its history. Moreover, note that
  Assumption~\ref{assum}\ref{item:assum-bdd} is trivially satisfied
  for a constant delay ($\phi(t) = t - D$) with $M_0 = D$ and $M_1 =
  m_2 = 1$. Finally, Assumption~\ref{assum}\ref{item:assum-psi} is
  imposed for simplicity and to let us focus on the design of the
  actuation triggering times. In fact, the values of $\{\tau_\ell\}$
  other than $\tau_0$ are irrelevant theoretically but practically
  critical for stability, a point we discuss in detail in
  Sections~\ref{sec:sensing} and~\ref{sec:sims}.

\new{The resulting networked control scheme is illustrated in
  Figure~\ref{fig:struct}. Our considered problem is then as follows.}

\begin{problem}\longthmtitle{Event-Triggered Stabilization under
    Sensing and Actuation Delay}\label{problem}
  Design the sequence of actuation triggering times%
  \footnote{Recall that $t_0 = \psi^{-1}(0)$ is fixed.}  $\{t_k\}_{k =
    1}^\infty$ and the corresponding control values $\{u(t_k)\}_{k =
    0}^\infty$ such
  that 
  \nnew{$\{t_{k + 1} - t_k\}_{k \ge 0}$ is uniformly lower bounded by
    a strictly positive constant} and the closed-loop
  system~\eqref{eq:dynamics} is globally asymptotically stable using
  the piecewise constant control~\eqref{eq:u1} and the
  delayed information $\{x(\tau_\ell)\}_{\ell = 0}^\infty$ received,
  resp., at $\{\psi^{-1}(\tau_\ell)\}_{\ell = 0}^\infty$.%
  \footnote{
    We require that the control law is causal, i.e., $t_k$ and
    $u(t_k)$ depend only on the states $\{x(\tau_\ell)\}$ that have
    reached the controller by the time $t_k$. While sampling may be
    modeled as a specific type of delay, we capture it with the
    prediction error $e(t)$ (defined later). The values $\phi(t)$ and
    $\psi(t)$ only capture the delays in actuation and sensing, resp.}
\oprocend
\end{problem}

The requirement that \new{$\{t_k\}_{k \ge 0} \cap [a, b]$ be finite
  for any $0 \le a \le b <
  \infty$} 
ensures the resulting design is implementable by avoiding finite
accumulation points, i.e., Zeno behavior. We propose a solution
  to Problem~\ref{problem} in the next section.

\begin{figure}
\begin{tikzpicture}[>=latex, every node/.style={draw, line width=1pt, rounded corners}, every path/.style={line width=1pt}, node distance=40pt]
  \node[rectangle, inner sep=6pt] (plant) {Plant}; %
  \node[left of=plant, isosceles triangle, inner sep=2pt, xshift=-30pt] (zoh) {\small{ZOH}}; %
  \draw[->, shorten <=-3pt] (zoh) to node[xshift=-5pt, yshift=8pt, draw=none]{\nnew{${\scriptstyle u_p(t)}$}} (plant); %
  \node[right of=plant, circle, line width=0.1pt, inner sep=0pt, xshift=10pt] (bend-midExt) {}; %
  \draw[shorten >=-1pt] (plant) to (bend-midExt); %
  \node[above right of=bend-midExt, circle, inner sep=0.7pt, fill, xshift=-15pt, yshift=-15pt] (openExt) {}; %
  \draw[shorten <=-0.3pt] (bend-midExt) to (openExt); %
  \node[right of=bend-midExt, circle, inner sep=0.7pt, fill, xshift=-20pt] (closedExt) {}; %
  \node[left of=openExt, circle, line width=0.1pt, inner sep=0pt, xshift=28pt] (arr-tailExt) {}; %
  \node[coordinate, left of=closedExt, line width=0.1pt, inner sep=0pt, xshift=34pt, yshift=-8pt] (arr-headExt) {}; %
  \draw[->, shorten <=-1pt, shorten >=-2pt, bend left=20] (arr-tailExt) to (arr-headExt); %
  \node[draw=none, left of=arr-tailExt, xshift=33pt, yshift=4pt] (tauk) {${\scriptstyle \tau_\ell}$}; %
  \node[right of=plant, circle, line width=0.1pt, inner sep=0pt, xshift=55pt] (bend-top-right) {}; %
  \draw[shorten >=-1pt] (closedExt) to (bend-top-right); %
  \node[below right of=plant, circle, xshift=67pt, yshift=-6pt] (netExt) {${\scriptstyle \psi(t)}$}; %
  \draw[shorten <=-1pt, shorten >=-1pt] (bend-top-right) to (netExt); %
  \node[below of=plant, rectangle, inner sep=6pt, xshift=0pt, yshift=-30pt] (controller) {Controller}; %
  \node[left of=controller, circle, line width=0.1pt, inner sep=0pt, xshift=-20pt] (bend-mid) {}; %
  \draw[shorten >=-1pt] (controller) to (bend-mid); %
  \node[above left of=bend-mid, circle, inner sep=0.7pt, fill, xshift=15pt, yshift=-15pt] (open) {}; %
  \draw[shorten <=-0.3pt] (bend-mid) to (open); %
  \node[left of=bend-mid, circle, inner sep=0.7pt, fill, xshift=20pt] (closed) {}; %
  \node[right of=controller, circle, line width=0.1pt, inner sep=0pt, xshift=55pt] (bend-bottom-right) {}; %
  \draw[shorten <=-1pt, shorten >=-1pt] (netExt) to (bend-bottom-right); %
  \draw[->, shorten <=-1pt] (bend-bottom-right) to (controller); %
  \node[below left of=plant, circle, xshift=-80pt, yshift=-6pt] (net) {${\scriptstyle \phi(t)}$}; %
  \node[left of=closed, circle, line width=0.1pt, inner sep=0pt, xshift=12pt] (bend-bottom-left) {}; %
  \draw[shorten >=-1pt] (closed) to (bend-bottom-left); %
  \draw[->, shorten <=-1pt] (bend-bottom-left) to (net); %
  \node[left of=zoh, circle, line width=0.1pt, inner sep=0pt, xshift=1pt] (bend-top-left) {}; %
  \draw[shorten >=-1pt] (net) to (bend-top-left); %
  \draw[->, shorten <=-1pt] (bend-top-left) to (zoh); %
  \node[right of=open, circle, line width=0.1pt, inner sep=0pt, xshift=-28pt] (arr-tail) {}; %
  \node[coordinate, right of=closed, line width=0.1pt, inner sep=0pt, xshift=-34pt, yshift=-8pt] (arr-head) {}; %
  \draw[->, shorten <=-1pt, shorten >=-2pt, bend right=20] (arr-tail) to (arr-head); 
  \node[draw=none, above right of=plant, xshift=3pt, yshift=-20pt] (xt) {${\scriptstyle x(t)}$}; %
  \node[draw=none, left of=open, xshift=20pt, yshift=-5pt] (utk) {${\scriptstyle u(t_k)}$}; \node[draw=none, left of=zoh, xshift=10pt, yshift=8pt] (uphitk) {${\scriptstyle u(\phi(t_k))}$}; %
  \node[draw=none, right of=arr-tail, xshift=-33pt, yshift=4pt] (tk) {${\scriptstyle t_k}$}; %
  \node[draw=none, right of=closedExt, xshift=-25pt, yshift=8pt] (xtauk) {${\scriptstyle x(\tau_\ell)}$}; %
  \node[draw=none, right of=controller, xshift=20pt, yshift=8pt] (xpsitauk) {${\scriptstyle x(\psi(\tau_\ell))}$}; %
\node[below of=controller, dashed, draw=black!70, minimum width=130pt, minimum height=40pt, yshift=-10pt](controller-box) {};%
\node[right of=controller-box, inner sep=7pt, xshift=-12pt] (predictor) {Predictor}; %
\node[left of=predictor, inner sep=5pt, xshift=-30pt] (K) {$K(\cdot)$}; %
\node[right of=predictor, circle, line width=0.1pt, inner sep=0pt, xshift=20pt] (enter) {}; %
\node[coordinate, left of=K, line width=0.1pt, inner sep=0pt, xshift=-22pt] (exit) {};%
\draw[->, shorten <=-1pt] (enter) to (predictor);%
\draw[->] (predictor) to (K);%
\draw[->, shorten >=-1pt] (K) to (exit);%
\node[draw=none, right of=predictor, xshift=18pt, yshift=8pt] (xt2) {${\scriptstyle x(\psi(\tau_\ell))}$};%
\node[draw=none, left of=predictor, xshift=1pt, yshift=8pt] (pt) {${\scriptstyle p(t)}$};%
\node[draw=none, left of=K, xshift=-1pt, yshift=8pt] (Kpt) {${\scriptstyle K(p(t))}$};%
\draw[line width=0.5pt, shorten >=5pt] (controller.south east) to (controller-box.16);%
\draw[line width=0.5pt, shorten >=3pt] (controller.south west) to (controller-box.164);%
\end{tikzpicture} 
\caption{The considered networked control scheme with sensing and
  actuation delays and event-triggering (top) and the proposed
  predictor-based controller (bottom).}
\label{fig:struct}
\end{figure}

\section{Event-Triggered Design and Analysis}\label{sec:control}

In this section, we propose an event-triggered control policy to solve
Problem~\ref{problem}. We start our analysis with the simpler case
where the controller receives state feedback continuously (i.e.,
$\{x(t)\}_{t = 0}^\infty$ instead of $\{x(\tau_\ell)\}_{\ell =
  0}^\infty$) without delays (i.e., $\psi(t) = t$), and later extend
it to the general case.

\subsection{Predictor Feedback Control for Time-Delay
  Systems}\label{subsec:pf}

Here we review the continuous-time stabilization of the
dynamics~\eqref{eq:dynamics} by means of a predictor-based feedback
control~\citep{NBL-MK:13}. For convenience, we denote the inverse of
$\phi$ by
  $\sigma(t) = \phi^{-1}(t)$,
for all $t \ge 0$. The inverse exists since $\phi$ is strictly
monotonically increasing. From~\eqref{eq:bounds-delay}, for
all $t \ge \phi(0)$,
\begin{align*}
  \dot \sigma(t) \le M_2 \triangleq m_2^{-1}.
\end{align*}
To compensate for the delay, at any time $t \ge
\phi(0)$, the controller makes the following prediction of the future
state of the plant,
\begin{align}\label{eq:p}
  p(t) = x(\sigma(t)) = \new{x(t_+) + \int_{\phi(t_+)}^t} \!\! \dot \sigma(s)
  f(p(s), u(s)) d s.
\end{align}
This integral is computable by the
controller since it only requires knowledge of the initial or current
state of the plant and the history of
$u(t)$ and $p(t)$, all of which are available to the
controller. 
\nnew{In the remainder, we thus assume that $p(t)$ can be computed exactly, but hint that numerical integration errors can lead to instability if not treated properly. We will give a detailed empirical discussion of this matter in Section~\ref{sec:sims} but its rigorous analysis remains open for future research.}

As shown in Figure~\ref{fig:struct}, the controller
applies the control law $K$ on the prediction $p$ to
compensate for the delay, 
\begin{align}\label{eq:non-et-u}
  u(t) = K(p(t)), \qquad t \ge 0.
\end{align}
The next result shows convergence for the closed-loop system.

\begin{proposition}\longthmtitle{Asymptotic Stabilization by Predictor
    Feedback~\citep{NBL-MK:13}} 
  Under \new{Assumption~\ref{assum}}, the closed-loop
  system~\eqref{eq:dynamics} under the controller~\eqref{eq:non-et-u}
  is globally asymptotically stable, i.e., there exists $\beta \in
  \Kc\Lc$ such that for any $x(0) \in \real^n$ and bounded
  $\{u(t)\}_{t = \phi(0)}^0$, for all $t \ge 0$,
  \begin{align*}
    |x(t)| + \sup_{\phi(t) \le \tau \le t} |u(\tau)| \le
    \beta\Big(|x(0)| + \sup_{\phi(0) \le \tau \le 0} |u(\tau)|,
    t\Big) .
  \end{align*}
\end{proposition}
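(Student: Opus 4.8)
The plan is to exploit the defining feature of predictor feedback—that the control, once delayed, coincides with the nominal stabilizing feedback applied to the \emph{current} state—so as to reduce the closed loop to the delay-free target system $\dot x = f(x,K(x))$, and then to promote the resulting stability into the quantitative $\Kc\Lc$ estimate. First I would check that the prediction is \emph{exact}, i.e.\ that the integral relation~\eqref{eq:p} is consistent with $p(t) = x(\sigma(t))$ along closed-loop trajectories: writing $x(\sigma(t)) = x(t^+) + \int_{t^+}^{\sigma(t)} f(x(s), u(\phi(s)))\,ds$ and applying the substitution $s = \sigma(r)$, for which $ds = \dot\sigma(r)\,dr$ and the limits $t^+,\sigma(t)$ become $\phi(t^+),t$, reproduces~\eqref{eq:p} verbatim after using $x(\sigma(r)) = p(r)$.

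Substituting this identity into the control law~\eqref{eq:non-et-u} is the crux. For $t \ge \sigma(0)$ one has $\phi(t) \ge 0$, hence $u(\phi(t)) = K(p(\phi(t))) = K\big(x(\sigma(\phi(t)))\big) = K(x(t))$, so the plant~\eqref{eq:dynamics} obeys the target dynamics $\dot x(t) = f(x(t), K(x(t)))$. This system is $0$-GAS, since $K$ renders~\eqref{eq:iss-assum} ISS (set $w \equiv 0$ in~\eqref{eq:iss-lyap2}). On the transient $[0,\sigma(0)]$ the delayed argument $\phi(t)\in[\phi(0),0]$ reads only the given history $\{u(\tau)\}_{\tau=\phi(0)}^{0}$, so the no-finite-escape assumption bounds $|x(\sigma(0))|$ by a continuous function of $|x(0)| + \sup_{\phi(0)\le\tau\le0}|u(\tau)|$.

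To produce the explicit $\Kc\Lc$ bound I would combine the ISS-Lyapunov function $S$ of~\eqref{eq:iss-lyap2} with a Krasovskii term tracking the in-transit control, e.g.
\begin{align*}
  V(t) = S(x(t)) + \int_{\phi(t)}^{t} \omega(t,s)\,\rho\big(|u(s) - K(p(s))|\big)\,ds,
\end{align*}
with $\omega$ a positive weight absorbing $\dot\sigma$. The integrand vanishes for $s \ge 0$ (there $u(s) = K(p(s))$) and is active only while the initial history remains in the window $[\phi(t),t]$, i.e.\ for $t < \sigma(0)$; differentiating $V$, invoking the decay in~\eqref{eq:iss-lyap2}, and using the transport structure of the actuator state yields a negative-definite bound on $\dot V$, whence a comparison argument gives the $\Kc\Lc$ estimate on $|x(t)|$. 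Since $K$ is globally Lipschitz with $K(0)=0$ and constant $L_K$, one has $|u(\tau)| = |K(p(\tau))| \le L_K\,|x(\sigma(\tau))|$ for $\tau \ge 0$, so the same decaying bound controls $\sup_{\phi(t)\le\tau\le t}|u(\tau)|$; merging this with the transient bound assembles the claimed inequality.

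The main obstacle is the time variation of the delay: the inverse change of variables $\sigma=\phi^{-1}$ threads the factor $\dot\sigma$ (bounded through $m_1, M_2$ via~\eqref{eq:bounds-delay}) through both the predictor integral and the Krasovskii weight, and keeping the resulting cross terms sign-definite while assembling a \emph{single} $\Kc\Lc$ function that is valid globally, uniformly in $t$, and simultaneously covers the finite transient, the asymptotic regime, and the moving supremum window $[\phi(t),t]$ is the delicate bookkeeping.
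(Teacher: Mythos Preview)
The paper does not supply its own proof of this proposition; it is stated as a review result imported from~\cite{NBL-MK:13}, and the proof is omitted. Your proposal reconstructs essentially the standard argument from that reference: verify that the predictor identity $p(t)=x(\sigma(t))$ holds exactly along closed-loop trajectories, observe that for $t\ge\sigma(0)$ the delayed input satisfies $u(\phi(t))=K(x(t))$ so the plant collapses to the delay-free target system, handle the finite transient $[0,\sigma(0)]$ via the no-finite-escape assumption, and promote these pieces to a single $\Kc\Lc$ bound through a Lyapunov--Krasovskii functional penalizing the in-transit mismatch $u-K(p)$ (which is exactly the signal $w$ the paper later introduces). The outline is correct and matches the cited source; the bookkeeping you flag as delicate---uniformly merging the transient, the asymptotic regime, and the supremum over the moving window $[\phi(t),t]$---is precisely what Lemmas~8.10 and~8.11 of~\cite{NBL-MK:13} (invoked later in the paper's proof of Theorem~\ref{thm:zeno&gas}) are designed to handle.
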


\subsection{Design of Event-triggered Control
  Law}\label{subsec:event-design}

Following Section~\ref{subsec:pf}, we let the controller make the
prediction $p(t)$ according to~\eqref{eq:p} for all $t \ge
\phi(0)$. Since the controller can only update $u(t)$ at discrete
times $\{t_k\}_{k = 0}^\infty$, it uses the piecewise-constant
control~\eqref{eq:u1} and assigns the control
\begin{align}\label{eq:u2}
  u(t_k) = K(p(t_k)),
\end{align}
for all $k \ge 0$. In order to design the triggering times $\{t_k\}_{k
  = 1}^\infty$, we use Lyapunov stability tools to determine when the
controller has to update $u(t)$ to prevent instability. We define the
triggering error for all $t \ge \phi(0)$ as
\begin{align}\label{eq:e}
  e(t) = \begin{cases} p(t_k) - p(t) \; &\text{if} \; t \in
    [t_k, t_{k+1}) \text{ for  } k \ge 0,
    \\
    0 \; &\text{if} \; t \in [\phi(0), t_0),
  \end{cases}
\end{align}
so that $ u(t) = K(p(t) + e(t))$, for $t \ge t_0$.  Let
\begin{align}\label{eq:w}
  w(t) = u(t) - K (p(t) + e(t)), \qquad t \ge \phi(0),
\end{align}
where $w(t) = 0$ for $t \ge t_0$ but $w(t)$ is in general nonzero for
$t \in [\phi(0), t_0)$. Computing $u(\phi(t))$ from~\eqref{eq:w}
  and substituting it in~\eqref{eq:dynamics}, the closed-loop system
can be written
\begin{align}\label{eq:cl}
  \dot x(t) = f\big(x(t), K\big(x(t) + e(\phi(t))\big) +
  w(\phi(t))\big),
\end{align}
for all $t \ge 0$. Notice that~\eqref{eq:cl} simplifies
to~\citep[Eq.~(3)]{PT:07} in the absence of delay ($\phi(t) = t$).
Let $g(x, w) = f(x, K(x) + w)$ for all $x, w$. By
\new{Assumption~\ref{assum}\ref{item:assum-iss},} 
there exists a continuously differentiable function $S: \real^n \to
\real$ and $\alpha_1, \alpha_2, \gamma, \rho \in \Kc_\infty$ such that
\begin{align}\label{eq:alpha12}
  \alpha_1(|x(t)|) \le S(x(t)) \le \alpha_2(|x(t)|),
\end{align}
and $(\Lc_g S)(x, w) \le \new{-} \gamma(|x|) + \rho(|w|)$. Therefore, we have
\begin{align}\label{eq:Lf_S}
  &(\Lc_f S)\big(x(t), K\big(x(t) + e(\phi(t))\big) + w(\phi(t))\big)
  \\
  \notag &= (\Lc_g S)\big(x(t), K\big(x(t) \!+\!
  e(\phi(t))\big) \!+\! w(\phi(t)) \!-\! K(x(t))\big)
  \\
  \notag &\!\le \!-\gamma(|x(t)|) + \rho\big(\big|K\big(x(t) \!+\!
  e(\phi(t))\big) \!+\! w(\phi(t)) \!-\! K(x(t))\big|\big).
\end{align}
\nnew{As in~\citep[eq.~(8.47)]{NBL-MK:13},} let%
\footnote{\nnew{Note that $\rho$ can always be chosen such that~\eqref{eq:V} is well-defined, e.g., by choosing it such that $\rho(r) / r \in \Kc_\infty$ using~\cite[Thm 1]{ES-AT:95}.}}
\begin{subequations}
  \begin{align}
    \label{eq:V} V(t) &= S(x(t)) + \frac{2}{b} \int_0^{2L(t)} \frac{\rho({r})}{{r}} d
    r,
    \\
    \label{eq:L} L(t) &= \sup_{t \le \tau \le \sigma(t)} |e^{b (\tau - t)}
    w(\phi(\tau))|,  
\end{align}
\end{subequations}
where $b > 0$ is a design parameter. 
\new{Note that the second term in~\eqref{eq:V} may only be nonzero for $t \in [\phi(0), t_0)$ since the system is open-loop over this interval (cf.~\eqref{eq:e},\eqref{eq:w}).} 
The next result establishes an
upper bound on $d V / d t$.

\begin{proposition}\longthmtitle{Upper-bounding $\dot
    V(t)$}\label{prop:ub-V-dot}
  For the system~\eqref{eq:dynamics} under the control defined
  by~\eqref{eq:u1} and~\eqref{eq:u2} and the predictor~\eqref{eq:p},
  we have \nnew{for any solution with maximal interval of existence
    $[0,t_{\max})$,}
  \begin{align}\label{eq:Vdot-bound}
    \dot V(t) \le -\gamma(|x(t)|) - \rho(2 L(t)) + \rho(2 L_K
    |e(\phi(t))|),
  \end{align}
  \nnew{for all $t \in [0, t_{\max}) \setminus \{\bar t\}$ and $V(\bar
    t^-) \ge V(\bar t^+)$, where $L_K$ is the Lipschitz constant of
    $K$ and $\bar t \in [0, \sigma(0)]$ is the greatest time such that
    $w(t) = 0$ for all $t > \bar t$.}
\end{proposition}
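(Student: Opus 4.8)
The plan is to differentiate $V$ along the closed-loop trajectory and control its two summands separately. For the $S$-term, I would start from the already-derived bound~\eqref{eq:Lf_S} on $(\Lc_f S)$ and estimate the argument of $\rho$ by the triangle inequality together with the global Lipschitz property of $K$, namely $|K(x(t)+e(\phi(t)))+w(\phi(t))-K(x(t))| \le L_K|e(\phi(t))|+|w(\phi(t))|$. Invoking the elementary subadditivity $\rho(a+b)\le \rho(2a)+\rho(2b)$, valid for any $\rho\in\Kc$ since $a+b\le 2\max\{a,b\}$, this turns~\eqref{eq:Lf_S} into $\dot S(x(t)) \le -\gamma(|x(t)|) + \rho(2L_K|e(\phi(t))|) + \rho(2|w(\phi(t))|)$.

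Differentiating the Krasovskii term and using $\frac{d}{dt}\int_0^{2L(t)}\frac{\rho(r)}{r}\,dr = \frac{\rho(2L(t))}{L(t)}\dot L(t)$ (well-defined by the standing assumption $\int_0^1\frac{\rho(r)}{r}\,dr<\infty$), the whole estimate reduces to establishing the decay inequality $\dot L(t) \le -b\,L(t)$ for $t\neq\bar t$. To prove this I would write $L(t)=e^{-bt}G(t)$ with $G(t)=\sup_{\tau\in[t,\sigma(t)]} e^{b\tau}|w(\phi(\tau))|$ from~\eqref{eq:L}, so that $\dot L = -bL + e^{-bt}\dot G$; it then suffices to show that $G$ is non-increasing off $\bar t$. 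I would treat $G$ as the supremum of the fixed map $\tau\mapsto e^{b\tau}|w(\phi(\tau))|$ over the time-varying window $[t,\sigma(t)]$ and analyze it through its upper Dini derivative (a Danskin/envelope-type argument). As $t$ increases the left endpoint advances, which only deletes points and hence cannot raise the supremum, while the right endpoint advances at rate $\dot\sigma(t)$, introducing the value $e^{b\sigma(t)}|w(t)|$, which vanishes precisely because $w(t)=0$ for $t>\bar t$. Thus $\dot G\le 0$ and $\dot L\le -bL$ away from $\bar t$; the single exceptional instant is $\bar t$, where $w$ switches off and the supremum may drop, and here I would argue the resulting change in $V$ is a \emph{downward} jump, giving $V(\bar t^-)\ge V(\bar t^+)$.

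Finally, I would assemble the two pieces. Taking $\tau=t$ in the supremum defining $L$ in~\eqref{eq:L} gives $|w(\phi(t))|\le L(t)$, hence $\rho(2|w(\phi(t))|)\le\rho(2L(t))$, while the decay inequality yields $\frac{2}{b}\frac{\rho(2L(t))}{L(t)}\dot L(t)\le -2\rho(2L(t))$. Adding these to the bound on $\dot S$ collapses the $L$-terms, $-2\rho(2L(t))+\rho(2L(t))=-\rho(2L(t))$, and produces exactly~\eqref{eq:Vdot-bound}.

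The main obstacle is the second step: $L$ is a nonsmooth Lyapunov--Krasovskii functional, so $\dot L\le -bL$ must be made rigorous in terms of Dini derivatives of a supremum taken over a moving interval, and one must verify carefully both that the only instant at which this inequality can fail is $\bar t$ (the mechanism being the vanishing of the incoming right-endpoint contribution $e^{b\sigma(t)}|w(t)|$ for $t>\bar t$) and that the accompanying jump in $V$ is downward. By comparison, the Lipschitz/subadditivity estimate on $\dot S$ and the bookkeeping that telescopes the $L$-terms are routine.
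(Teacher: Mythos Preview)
Your proposal is correct and follows the same three-part skeleton as the paper: bound $\dot S$ via the Lipschitz property of $K$ and the subadditivity $\rho(a+b)\le\rho(2a)+\rho(2b)$, establish a decay estimate for $L$, and telescope using $|w(\phi(t))|\le L(t)$. The first and third steps are essentially identical to the paper's.

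The genuine difference is in how the decay of $L$ is obtained. The paper does \emph{not} use your factorization $L=e^{-bt}G$ with a moving-window monotonicity argument. Instead, it approximates the sup-norm in~\eqref{eq:L} by the $2n$-norms
\[
  L_n(t)=\Big[\int_t^{\sigma(t)} e^{2nb(\tau-t)}w(\phi(\tau))^{2n}\,d\tau\Big]^{1/2n},
\]
computes $\dot L_n=-bL_n-\tfrac{L_n}{2n}\big(w(\phi(t))/L_n\big)^{2n}$ (the upper-limit contribution vanishes because $w(t)=0$ for $t\ge 0$, exactly the mechanism you identified at the right endpoint), argues that $|w(\phi(t))/L_n|<1$ on $[0,t_1]$ for large $n$ and $b$, and then passes to the limit via Rudin's differentiation-of-uniform-limits theorem to conclude the \emph{equality} $\dot L=-bL$ for $t\neq\bar t$. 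Your route is more elementary and sidesteps the uniform-convergence and ``sufficiently large $b$'' subtleties; it yields only the inequality $D^+L\le -bL$, but that is all the downstream Comparison-Principle argument needs. Conversely, the paper's approach, if carried through, delivers classical differentiability of $V$ off $\bar t$, matching the literal statement of the proposition more closely.

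One small imprecision: you write that ``$w$ switches off'' at $\bar t$, but $w(t)$ already vanishes for $t\ge 0$. What happens at $\bar t$ is that the support of $\tau\mapsto w(\phi(\tau))$ exits the window $[t,\sigma(t)]$; your right-endpoint observation $e^{b\sigma(t)}|w(t)|=0$ is the correct reason $G$ cannot increase, and the possible downward jump of $L$ at $\bar t$ (coming from a discontinuity of $u$ at $0$) is what produces $V(\bar t^-)\ge V(\bar t^+)$.
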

\begin{proof}
  Using~\eqref{eq:Lf_S}, we have
  \begin{align}\label{eq:Lf_S2}
    \notag \!\!\!\!\!\!&\Lc_f S(x(t))
    \\
    \notag \!\!\!\!\!\!&\le \!-\gamma(|x(t)|) + \rho\big(|w(\phi(t))| \!+\!
    |K(x(t) \!+\! e(\phi(t))) \!-\! K(x(t))|\big)
    \\
    \notag \!\!\!\!\!\!&\le -\gamma(|x(t)|) + \rho\big(|w(\phi(t))| + L_K
    |e(\phi(t))|\big)
    \\
    \!\!\!\!\!\!&\le -\gamma(|x(t)|) + \rho(2|w(\phi(t))|) + \rho(2L_K
    |e(\phi(t))|).
  \end{align}
  \nnew{In the following, we provide a rigorous proof of the fact
    $\dot L(t) = -b L(t)$ stated in~\citep{NBL-MK:13}.  Similar to
    Lemma~8.9 therein, it holds that}
  \begin{align*}
    L(t) = \lim_{n \to \infty} \bigg[\int_t^{\sigma(t)} \!\!\!e^{2 n b
      (\tau - t)} w(\phi(\tau))^{2n} d \tau \bigg]^\frac{1}{2n}
    \!\!\triangleq \lim_{n \to \infty} L_n(t),
  \end{align*}
  since $e^{-b (t - \tau)} w(\phi(\tau))$ is bounded for $\tau \in [t,
  \sigma(t)]$ and any $t \ge 0$ and $[t, \sigma(t)]$ has finite
  measure.
  In fact, it can be shown that this convergence is uniform over $[0,
  t_1]$ for any $t_1 < \bar t$. Therefore, since $\dot L_n(t) = -b
  L_n(t) - \frac{L_n}{2n} \left(\frac{w(\phi(t))}{L_n}\right)^{2n}$,
  $\frac{w(\phi(t))}{L_n} < 1$ for $t \in [0, t_1]$ and sufficiently
  large $n$ and $b$, and $t_1 \in [0, \bar t)$ is arbitrary, it
  follows from~\citep[Thm 7.17]{WR:76} that $\dot L(t) = -b L(t)$ for
  $t \in (0, \infty) \setminus \{\bar t\}$.
 Combining this and~\eqref{eq:Lf_S2}, we get
  \begin{align*}
    \dot V(t) &\le -\gamma(|x(t)|) + \rho(2|w(\phi(t))|) + \rho(2L_K
    |e(\phi(t))|)
    \\
    &\quad + \frac{2}{b} 2 \dot L(t) \frac{\rho(2 L(t))}{2 L(t)}
    \\
    &\le -\gamma(|x(t)|) + \rho(2|w(\phi(t))|) + \rho(2L_K
    |e(\phi(t))|)
    \\
    &\quad - 2 \rho(2 L(t)).
  \end{align*}
  for $t \in (0, \infty) \setminus \{\bar
  t\}$. Equation~\eqref{eq:Vdot-bound} thus follows since
  $|w(\phi(t))| \le L(t)$ (c.f.~\eqref{eq:L}) and the fact that $\rho$
  is strictly increasing. Finally, since $S(x(t))$ is continuous,
  $L(\bar t^-) \ge 0$, and $L(\bar t^+) = 0$, we get $V(\bar t^-) \ge
  V(\bar t^+)$.
\end{proof}

Proposition~\ref{prop:ub-V-dot} is the basis for our event-trigger
design. Formally, we select $\theta \in (0, 1)$ and require 
\begin{align*}
  \rho(2L_K |e(\phi(t))|) \le \theta \gamma(|x(t)|), \qquad t \ge 0,
\end{align*}
which can be equivalently written as
\begin{align}\label{eq:trig} 
  |e(t)| \le \frac{\rho^{-1}(\theta \gamma(|p(t)|))}{2 L_K}, \qquad t
  \ge \phi(0).
\end{align}
Notice from~\eqref{eq:e} and the fact $t=0$ that~\eqref{eq:trig} holds
on $[\phi(0), t_0]$. Equation~\eqref{eq:trig} fully specifies the
sequence of times $\{t_k\}_{k = 1}^\infty$ and its dependence on the
actuation delay. For each~$k$, after each time $t_k$, the controller
keeps evaluating~\eqref{eq:trig} until it reaches equality. At this
time, labeled $t_{k+1}$, the controller triggers the next event that
sets $e(t_{k+1}) = 0$ and maintains~\eqref{eq:trig}. Notice that
``larger'' $\gamma$ and ``smaller'' $\rho$ (corresponding to
``stronger'' input-to-state stability in~\eqref{eq:iss-lyap2}) are
then more desirable, as they \new{are intuitively expected to let the
  controller} update $u$ less often. Our ensuing analysis shows global
asymptotic stability of the closed-loop system and the existence of a
uniform lower bound on the inter-event times.

\subsection{Convergence Analysis under Event-triggered
  Law}\label{subsec:trig}

In this section we show that our event triggered law~\eqref{eq:trig}
solves Problem~\ref{problem} by showing, in the following result, that
the inter-event times are uniformly lower bounded (so, in particular,
there is no finite accumulation point in time) and the closed-loop
system achieves global asymptotic stability.

\begin{theorem}\longthmtitle{Uniform Lower Bound for the
    Inter-Event Times and Global Asymptotic Stability}\label{thm:zeno&gas}
  Suppose that the class $\Kc_\infty$ function
    $\Gc: r \mapsto \gamma^{-1} (\rho(r)/\theta)$ is (locally) Lipschitz.
  For the system~\eqref{eq:dynamics} under the
  control~\eqref{eq:u2} and the triggering
  condition~\eqref{eq:trig}, the following hold:
  \begin{enumerate}
  \item there exists $\delta \new{= \delta(x(0), \{u(t)\}_{t =
        \phi(0)}^0)} > 0$ such that $t_{k+1} - t_k \ge \delta$ for all
    $k \ge 1$,
\item there exists $\beta \in
  \Kc\Lc$ such that for any $x(0) \in \real^n$ and bounded
  $\{u(t)\}_{t = \phi(0)}^0$, we have for all $t \ge 0$,
  \begin{align}\label{eq:cl-as}
    \hspace{-30pt} |x(t)| + \hspace{-3pt} \sup_{\phi(t) \le \tau \le
      t} \hspace{-3pt} |u(\tau)| \le \beta\Big(|x(0)| + \hspace{-3pt}
    \sup_{\phi(0) \le \tau \le 0} \hspace{-3pt} |u(\tau)|, t\Big).
  \end{align}
   \end{enumerate} 
\end{theorem}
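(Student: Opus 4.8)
The plan is to use Proposition~\ref{prop:ub-V-dot} to reduce both items to properties of the single functional $V$, and then treat them almost independently. Substituting the triggering condition~\eqref{eq:trig} (equivalently $\rho(2L_K|e(\phi(t))|) \le \theta\gamma(|x(t)|)$) into the bound~\eqref{eq:Vdot-bound} gives
\begin{align*}
  \dot V(t) \le -(1-\theta)\gamma(|x(t)|) - \rho(2L(t)) \le 0
\end{align*}
for all $t \neq \bar t$, together with the downward jump $V(\bar t^-) \ge V(\bar t^+)$ already recorded in Proposition~\ref{prop:ub-V-dot}. Hence $V$ is non-increasing, and since $\alpha_1(|x(t)|) \le S(x(t)) \le V(t) \le V(0)$, the state $x$ is uniformly bounded by a function of $V(0)$. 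Feeding this into the predictor relation~\eqref{eq:p} and using the $C^1$ bound on $f$ and the Lipschitz bound on $K$ (as in the predictor analysis underlying Proposition~\ref{prop:ub-V-dot}) yields uniform bounds $B_x,B_p$ on $|x|$ and $|p|$, hence on $|u|=|K(p)|$; these depend only on the initial data. Establishing monotonicity of $V$ first is what makes these bounds genuinely uniform in $k$.

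For item (i) I would estimate the growth of the triggering error on a single inter-event interval $[t_k,t_{k+1})$, $k\ge 1$. Since $e(t_k)=0$ and the predictor dynamics give $\dot e(t) = -\dot\sigma(t)\,f(p(t),K(p(t_k)))$, the $C^1$ property of $f$ (with $f(0,0)=0$) and Lipschitzness of $K$ (with $K(0)=0$) produce, on the ball of radius $B_p$, an estimate $|\dot e(t)| \le A\mu$, where $\mu := \sup_{[t_k,t_{k+1}]}|p|$ (assumed positive, else the trajectory sits at the equilibrium and nothing is to prove) and $A$ collects $M_2$, the local Lipschitz constant of $f$, and $L_K$. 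Thus $|e(t)| \le A\mu\,(t-t_k)$. The event fires when $2L_K|e(t_{k+1})| = \Gc^{-1}(|p(t_{k+1})|)$, and here the hypothesis that $\Gc$ is locally Lipschitz is essential: with $L_\Gc$ a Lipschitz constant of $\Gc$ on $[0,B_p]$ and $\Gc(0)=0$ one has $\Gc^{-1}(s)\ge s/L_\Gc$, so the threshold is bounded below by $|p(t_{k+1})|/(2L_K L_\Gc)$. Combining the two bounds gives
\[
  t_{k+1}-t_k \ge \frac{|p(t_{k+1})|}{2L_K L_\Gc A\,\mu}.
\]
Finally, since $|p|$ varies at rate at most $A\mu$, either $t_{k+1}-t_k \ge 1/(2A)$ already, or $|p(t_{k+1})|\ge \mu/2$; in both cases $t_{k+1}-t_k \ge \delta := \min\{1/(2A),\,1/(4L_K L_\Gc A)\}>0$, with the state-magnitude factor $\mu$ cancelling so that $\delta$ is independent of $k$.

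For item (ii) I would integrate the decrease to obtain $\int_0^\infty \gamma(|x(t)|)\,dt \le V(0)/(1-\theta) < \infty$ (the downward jump only helps). Because $x$ and $u$ are bounded and $f$ is continuous, $\dot x$ is bounded, so $t\mapsto\gamma(|x(t)|)$ is uniformly continuous; Barbalat's lemma then yields $\gamma(|x(t)|)\to 0$, hence $|x(t)|\to 0$. To upgrade this to the $\Kc\Lc$ estimate~\eqref{eq:cl-as}, I would note that once $t$ exceeds $\sigma(\bar t)$ (a finite time, since $\bar t\le\sigma(0)$) one has $L(t)=0$, so $V=S(x)$ and $\dot V \le -(1-\theta)\gamma(\alpha_2^{-1}(V))$; the comparison lemma applied to this scalar inequality gives $V(t)\le\beta_V(V(\cdot),t)$ for some $\beta_V\in\Kc\Lc$, whence $|x(t)|\le\alpha_1^{-1}(\beta_V(\cdot,t))$. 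Folding in the bounded transient on $[0,\sigma(\bar t)]$ and bounding the initial value of $V$ by a $\Kc_\infty$ function of $|x(0)|+\sup_{\phi(0)\le\tau\le0}|u(\tau)|$ produces a $\Kc\Lc$ bound on $|x(t)|$. For the control contribution I would use $u(\tau)=K(x(\sigma(\tau)))$ and $\sigma(\phi(t))=t$ to write $\sup_{\phi(t)\le\tau\le t}|u(\tau)| \le L_K\sup_{t\le s\le\sigma(t)}|x(s)|$, which the (time-decreasing) $\Kc\Lc$ bound on $|x|$ controls through its value at time $t$; adding the two contributions gives~\eqref{eq:cl-as}.

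The step I expect to be the main obstacle is item (i): showing the inter-event times do not shrink as the trajectory approaches the origin. Naively the threshold $\Gc^{-1}(|p(t)|)/(2L_K)$ collapses to zero with $|p|$, which would permit arbitrarily frequent triggering. The resolution is precisely the local Lipschitz hypothesis on $\Gc$, which forces the threshold to shrink no faster than linearly in $|p|$, matching the linear-in-$|p|$ bound on the error growth rate near the origin so that the two magnitudes cancel. Ensuring the resulting $\delta$ is uniform in $k$ (and not merely positive for each interval) is the reason the a priori bounds $B_x,B_p$ must be pinned down from the monotonicity of $V$ before the triggering analysis is carried out.
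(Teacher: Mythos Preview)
Your proposal is correct in strategy but differs from the paper's proof in both items. For (i), the paper works with the ratio $r(t)=|e(t)|/|p(t)|$ and shows $\dot r \le M_2(1+r)\big(L_f(1+L_K)+L_fL_Kr\big)$; the Comparison Principle then lower-bounds each inter-event time by the transit time of an autonomous scalar ODE from $0$ to the threshold $1/(2L_{\gamma^{-1}\rho/\theta}L_K)$, automatically independent of $|p|$. Your direct bound on $|\dot e|$ together with the dichotomy ``either the interval is already long, or $|p(t_{k+1})|\ge\mu/2$'' reaches the same conclusion by a more elementary (if slightly lossier) argument. For (ii), the paper obtains a \emph{global} dissipation inequality: from $\dot V\le -\gamma_{\min}(|x|+L)$ with $\gamma_{\min}(r)=\min\{(1-\theta)\gamma(r),\rho(2r)\}$ and $V\le 2\alpha_{\max}(|x|+L)$ it gets $\dot V\le-\bar\alpha(V)$ for all $t$ at once, then invokes \cite[Lemmas~8.10--8.11]{NBL-MK:13} to pass from the bound on $w$ to the bound on $u$. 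You instead treat the transient $[0,\bar t]$ by monotonicity and apply the comparison lemma only after $L\equiv 0$; your Barbalat step is then redundant. One genuine slip: you write $u(\tau)=K(x(\sigma(\tau)))$, but the control is piecewise constant, $u(\tau)=K(x(\sigma(t_k)))$ for $\tau\in[t_k,t_{k+1})$, and $t_k$ need not lie in $[\phi(t),t]$. This is easily repaired via the triggering inequality $|p(t_k)-p(\tau)|\le \Gc^{-1}(|p(\tau)|)/(2L_K)$, which bounds $|p(t_k)|$ by a multiple of $|p(\tau)|=|x(\sigma(\tau))|$ with $\sigma(\tau)\in[t,\sigma(t)]$.
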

\begin{proof}
  Let $[0, t_{\max})$ be the maximal interval of existence of the
  solutions of the closed-loop system. The proof involves three
  steps. First, we prove that (ii) holds for $t < t_{\max}$. Then, we
  show that (i) holds until $t_{\max}$, and finally that $t_{\max} =
  \infty$.

  \emph{Step 1:} From Proposition~\ref{prop:ub-V-dot}
  and~\eqref{eq:trig}, we have
  \begin{align*}
    \dot V(t) &\le -(1 - \theta) \gamma(|x(t)|) - \rho(2 L(t)) \\
    &\le -\gamma_\text{min}(|x(t)| + L(t)), \qquad t \in [0, t_{\max})
    \setminus \{\bar t\},
  \end{align*}
  where $\gamma_\text{min}(r) = \min\{(1 - \theta) \gamma(r), \rho(2
  r)$ for all $r \ge 0$, so $\gamma_\text{min} \in \Kc$. Also, note
  that
  \begin{align*}
    V(t) \le \alpha_2(|x(t)|) + \alpha_0(L(t)) \le 2
    \alpha_\text{max}(|x(t)| + L(t)),
  \end{align*}
  where $\alpha_\text{max}(r) = \max\{\alpha_2(r), \alpha_0(r)\}$ and
  $\alpha_0(r) = \frac{2}{b} \int_0^{2r} \frac{\rho(s)}{s} d s$ for
  all $r \ge 0$. Since $\alpha_0, \alpha_2 \in \Kc_\infty$, we have
  $\alpha_\text{max} \in \Kc_\infty$, so $\alpha_\text{max}^{-1} \in
  \Kc$. Hence,
  \begin{align*}
    \dot V(t) \le -\alpha_\text{min}(\alpha_\text{max}^{-1}(V(t)/2))
    \triangleq \overline \alpha(V(t)), \ t \in [0, t_{\max}) \setminus \{\bar t\},
  \end{align*}
  where $\overline \alpha \in \Kc$. Therefore, using the Comparison
  Principle~\citep[Lemma 3.4]{HKK:02}, \citep[Lemma 4.4]{HKK:02}, and
  $V(\bar t^-) \ge V(\bar t^+)$, there exists $\beta_1 \in \Kc\Lc$
  such that $ V(t) \le \beta_1(V(0), t)$, $ t < t_{\max}$.  Therefore,
  \begin{align*}
    |x(t)| + L(t) \le \beta_2(|x(0)| + L(0), t), \qquad t < t_{\max},
  \end{align*}
  where $\beta_2(r, s) = \alpha_\text{min}^{-1}(\overline \beta(2
  \alpha_\text{max}(r), s))$ for any $r, s \ge 0$. Note that $\beta_2
  \in \Kc\Lc$. Since we have
  \begin{align*} 
    \sup_{\phi(t) \le \tau \le t} |w(\tau)| \le L(t) \le e^{b M_0}
    \sup_{\phi(t) \le \tau \le t} |w(\tau)|,
  \end{align*}
  it then follows that
  \begin{align}\label{eq:x-bound}
    \!\!\!\! |x(t)| + \!\!\! \sup_{\phi(t) \le \tau \le t} \!\!\!\! |w(\tau)| \le
    \beta_3\Big(|x(0)| + \!\!\! \sup_{\phi(0) \le \tau \le 0} \!\!\!\! |w(\tau)|,
    t\Big),
  \end{align} 
  for all $t < t_{\max}$, where $\beta_3(r, s) = \beta_2(e^{b M_0} r,
  s)$. This inequality leads to~\eqref{eq:cl-as} using the same
    steps as in~\citep[Lemmas 8.10, 8.11]{NBL-MK:13} (the only
    difference being the multiplicity of inputs).
  
  \emph{Step 2:} Equation~\eqref{eq:trig} can be rewritten as
  \begin{align*}
    |p(t)| \ge \gamma^{-1} \Big(\frac{\rho(2 L_K
        |e(t)|)}{\theta}\Big).
  \end{align*}
  From step 1,
  the prediction $p(t) = x(\sigma(t))$ and its error
  $e(t) = p(t_k) - p(t)$ are bounded. Therefore, there exists
  $L_{\gamma^{-1} \rho/\theta}>0$ such that for all $t \ge 0$,
  \begin{align*}
    \gamma^{-1} \Big(\frac{\rho(2 L_K |e(t)|)}{\theta}\Big) \le 2
  L_{\gamma^{-1} \rho/\theta} L_K|e(t)|.
  \end{align*}
  where $L_{\gamma^{-1} \rho/\theta}$ is the Lipschitz constant of
  $\Gc$ on the compact set that contains $\{e(t)\}_{t =
    0}^{t_{\max}}$. Hence, a sufficient (stronger) condition
  for~\eqref{eq:trig} is
  \begin{align}\label{eq:trig2}
    |p(t)| \ge 2 L_{\gamma^{-1} \rho/\theta} L_K |e(t)|.
  \end{align}
  Note that~\eqref{eq:trig2} is only for the purpose of analysis and
  is \emph{not} executed in place of~\eqref{eq:trig}.  Clearly, if the
  inter-event times of~\eqref{eq:trig2} are lower bounded, so are the
  inter-event times of~\eqref{eq:trig}.  Let $r(t) =
  \frac{|e(t)|}{|p(t)|}$ for any $t \ge 0$ (with $r(t) = 0$ if $p(t) =
  0$). For any $k \ge 0$, we have $r(t_k) = 0$ and $t_{k+1} - t_k$ is
  greater than or equal to the time that it takes for $r(t)$ to go
  from $0$ to $\frac{1}{2 L_{\gamma^{-1} \rho/\theta} L_K}$. Note that
  for any $t \ge 0$,
  \begin{align*}
    \dot r &= \frac{d}{d t} \frac{|e|}{|p|} = \frac{d}{d t} \frac{(e^T
      e)^{1/2}}{(p^T p)^{1/2}}
    \\
    &= \frac{(e^T e)^{-1/2} e^T \dot e (p^T p)^{1/2} - (p^T p)^{-1/2}
      p^T \dot p (e^T e)^{1/2}}{p^T p}
    \\
    &= -\frac{e^T \dot p}{|e| |p|} - \frac{|e| p^T \dot p}{|p|^3} \le
    \frac{|\dot p|}{|p|} + \frac{|e| |\dot p|}{|p|^2} = (1 + r)
    \frac{|\dot p|}{|p|},
  \end{align*}
  where the time arguments are dropped for better readability. To
  upper bound the ratio $|\dot p(t)|/|p(t)|$, we have
  from~\eqref{eq:p} that $\dot p(t) = \dot \sigma(t) f(p(t), u(t))$
  for all $t \ge \phi(0)$. By continuous differentiability of $f$
  (which implies Lipschitz continuity on compacts) and global
  asymptotic stability of the closed loop system, there exists
  $L_f > 0$ such that
  \begin{align*}
    |\dot p(t)| &= |\dot \sigma(t) f(p(t), u(t))| \le M_2 |f(p(t),
    K(p(t) + e(t)))|
    \\
    &\le M_2 L_f |(p(t), K(p(t) + e(t)))|
    \\
    &\le M_2 L_f(|p(t)| + |K(p(t) + e(t))|)
    \\
    &\le M_2 L_f(|p(t)| + L_K|p(t) + e(t)|)
    \\
    &\le M_2 L_f(1 + L_K) |p(t)| + M_2 L_f L_K |e(t)|
    \\
    \Rightarrow \dot r(t) &\le M_2 (1 + r(t))(L_f(1 + L_K) + L_f L_K
    |r(t)|).
  \end{align*}
  Thus, using the Comparison Principle~\citep[Lemma
  3.4]{HKK:02}, we have $t_{k+1} - t_k \ge \delta, k \ge 0$
  where $\delta$ is the time that it takes for the solution of
  \begin{align}\label{eq:r-dot}
    \dot r = M_2 (1 + r)(L_f(1 + L_K) + L_f L_K r),
  \end{align}
  to go from $0$ to $\frac{1}{2 L_{\gamma^{-1} \rho/\theta} L_K}$.
 
  \emph{Step 3:} Since all system trajectories are bounded and $t_k
  \xrightarrow{k \to \infty} \infty$, we have $t_{\max} = \infty$,
  completing the proof.
\end{proof}

A particular corollary of Theorem~\ref{thm:zeno&gas} is that the
proposed event-triggered law does not suffer from Zeno behavior, i.e.,
$t_k$ accumulating to a finite point $t_{\max}$. Also, note that the
lower bound $\delta$ in general depends on the initial conditions
$x(0)$ and $\{u(t)\}_{t = \phi(0)}^0$ through the Lipschitz constant
$L_{\gamma^{-1} \rho/\theta}$.%
\footnote{However, for any given compact set of $|x(0)|$ and $|u(t)|, t < 0$, equations~\eqref{eq:x-bound}, \eqref{eq:w}, \eqref{eq:e}, and~\eqref{eq:p} ensure that $x(t)$ and therefore $p(t)$ are bounded for all $t$, and so $e(t)$ belongs to a compact set due to~\eqref{eq:trig}. Hence, $L_{\gamma^{-1} \rho/\theta}$ and thus $\delta$ can be chosen uniformly over this set.}
\nnew{Finally, while Theorem~\ref{thm:zeno&gas} explicitly bounds $x$ and $u$, the simple time-shift relationship~\eqref{eq:p} between $p$ and $x$ ensures that any bound satisfied by $x(t), t \ge 0$, including that of Theorem~\ref{thm:zeno&gas}, is also satisfied by $p(t), t \ge 0$.}

\subsection{Delayed and Event-Triggered Sensing}\label{sec:sensing}

So far, we have not considered any delays in the availability of the
sensing information about the plant state, \new{which we consider
  next. Our treatment here shows that the above event-triggered
  controller with \emph{the same triggering condition~\eqref{eq:trig},
    and with slight adjustments in the employed control and predictor
    signals,} globally asymptotically stabilizes the plant while
  maintaining the same lower bound on the inter-event times.}

To address the general scenario in Problem~\ref{problem},~let
\begin{align*}
  \bar \ell = \bar \ell(t) = \max\setdef{\ell \ge 0}{\tau_\ell \le
    \psi(t)},
\end{align*} 
be the index of the last plant state available at the controller
at time~$t$. Then, \eqref{eq:p} is replaced with
\begin{align}\label{eq:p-psi}
 \!\!\!\!p(t) = x(\tau_{\bar \ell}) + \!\! \int_{\phi(\tau_{\bar \ell})}^t
  \!\!\!\!\! \dot \sigma(s) f(p(s), u(s)) d s, \ t \ge
  \psi^{-1}(0),
\end{align}
which is the best estimate of $x(\sigma(t))$ available to the controller%
\footnote{This only requires the controller to know
  $\psi(\tau_\ell)$ for every received state (not the full function
  $\psi$), which is realized by having a time-stamp
  for~$x(\tau_\ell)$.}.  Since $p(t)$ is not available before
$\psi^{-1}(0)$, the control signal~\eqref{eq:u1},~\eqref{eq:u2} is
updated as
\begin{align}\label{eq:u-new}
  u(t) = \begin{cases}
    K(p(t_k))  \quad &\text{if} \quad t \in [t_k, t_{k+1}), \ k \ge 0, \\
    0 \quad &\text{if} \quad t \in [0, t_0),
  \end{cases}
\end{align}
where the first event time is now $t_0 = \psi^{-1}(0)$.
We next provide the same guarantees as Theorem~\ref{thm:zeno&gas}.

\begin{theorem}\label{thm:psi}
  Consider the plant dynamics~\eqref{eq:dynamics} driven by the
  predictor-based event-triggered controller~\eqref{eq:u-new} with
  the predictor~\eqref{eq:p-psi} and triggering
  condition~\eqref{eq:trig}. Under \new{Assumption~\ref{assum}}, 
  the closed-loop system is globally asymptotically stable,
  namely, there exists $\beta \in \Kc\Lc$ such that~\eqref{eq:cl-as}
  holds for all $x(0) \in \real^n$, continuously differentiable $\{u(t)\}_{t =
    \phi(0)}^0$, and $t \ge 0$. Furthermore, there exists $\delta \new{= \delta(x(0), \{u(t)\}_{t = \phi(0)}^0)} > 0$ such that 
  $t_{k+1} - t_k \ge \delta$ for all $k \ge 0$.
\end{theorem}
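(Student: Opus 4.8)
The plan is to reduce the delayed-sensing, discretely-sampled setting to the continuous-sensing case already settled in Theorem~\ref{thm:zeno&gas}. The key observation is that, under the standing exact-model assumption, the predictor~\eqref{eq:p-psi} reconstructs the true future state \emph{exactly}, namely $p(t) = x(\sigma(t))$ for all $t \ge t_0 = \psi^{-1}(0)$, just as the continuous predictor~\eqref{eq:p} does. Once this identity holds, the control values in~\eqref{eq:u-new}, the signal fed to the trigger~\eqref{eq:trig}, and therefore the entire event stream and closed-loop trajectory coincide with those of Theorem~\ref{thm:zeno&gas}, so both conclusions follow by inheritance.

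First I would prove the predictor identity. Predictor feedback is insensitive to the choice of anchor: integrating the plant model forward from any \emph{true} past state, using the (controller-known) input history, returns the true future state. Concretely, integrating~\eqref{eq:dynamics} from $\tau_{\bar\ell}$ to $\sigma(t)$ in the plant time $s$ and substituting $s = \sigma(r)$ (so $\phi(s) = r$, $ds = \dot\sigma(r)\,dr$, and the applied input $u(\phi(s))$ becomes $u(r)$) shows that $x(\sigma(\cdot))$ satisfies the integral equation~\eqref{eq:p-psi}; since $f$ is $C^1$ and trajectories are bounded, that equation has a unique solution, whence $p(t) = x(\sigma(t))$. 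The only genuinely new feature relative to~\eqref{eq:p} is the re-anchoring triggered whenever a fresh sample $x(\tau_{\ell+1})$ arrives and $\bar\ell$ increments; but the new anchor equals $x(\sigma(\phi(\tau_{\ell+1})))$ and hence again lies on the curve $r \mapsto x(\sigma(r))$, so re-anchoring leaves $p$ unchanged and introduces no jump. Thus neither the sensing delay $t - \psi(t)$ nor the sampling instants $\{\tau_\ell\}$ degrade the prediction at all.

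With $p(t) = x(\sigma(t))$ in hand, the control reads $u(t_k) = K(x(\sigma(t_k)))$ and the trigger~\eqref{eq:trig} uses exactly the continuous-case signal, so the closed-loop system on $[t_0, \infty)$ is identical to the one analyzed in Theorem~\ref{thm:zeno&gas}; invoking that theorem delivers both the uniform inter-event bound $t_{k+1} - t_k \ge \delta$ and the $\Kc\Lc$ estimate~\eqref{eq:cl-as}, with $t_0$ as the effective initial time. The remaining obstacle, which I expect to be the only delicate point, is the start-up window: until the first sample reaches the controller at $t_0$,~\eqref{eq:u-new} applies $u \equiv 0$, so on a finite initial interval the plant runs open loop (with the prescribed history acting through the delay). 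In the Lyapunov--Krasovskii bookkeeping underlying Proposition~\ref{prop:ub-V-dot} this merely lengthens the interval on which the effective disturbance $w$ is nonzero before it vanishes; importantly, since the $u \equiv 0$ phase is known to the controller, the predictor still accounts for it and $p(t) = x(\sigma(t))$ stays exact for $t \ge t_0$. I would then invoke the no-finite-escape-time assumption to bound $x(\cdot)$ on the compact start-up interval, regard the resulting $x(t_0)$ and input history as a legitimate initial condition for the reduced problem, and absorb this finite transient into an enlarged $\beta \in \Kc\Lc$ to obtain~\eqref{eq:cl-as} on all of $[0,\infty)$.
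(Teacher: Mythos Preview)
Your proposal is correct and follows essentially the same approach as the paper: show that the predictor~\eqref{eq:p-psi} reproduces $p(t)=x(\sigma(t))$ so that Theorem~\ref{thm:zeno&gas} applies verbatim on $[t_0,\infty)$, then use the no-finite-escape-time assumption to bound the start-up window $[0,t_0]$ and splice the two estimates into a single $\beta\in\Kc\Lc$. The only cosmetic difference is that the paper factors the predictor identity through an intermediate state estimate $x(t)=x(\tau_{\bar\ell})+\int_{\tau_{\bar\ell}}^{t}f(x(s),u(\phi(s)))\,ds$ before applying~\eqref{eq:p}, whereas you verify~\eqref{eq:p-psi} directly via the substitution $s=\sigma(r)$; the two are equivalent.
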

\begin{proof}
   For simplicity, let $U(t) = \sup_{\phi(t) \le \tau \le t} |u(t)|$.
  Since the open-loop system exhibits no finite escape time behavior,
  the state remains bounded during the initial period $[0,
  t_0]$. Hence, for any $x(0)$ and any $\{u(t)\}_{t = \phi(0)}^0$
  there exists $\Xi > 0$ such that $|x(t)| \le \Xi$ for $t \in [0,
  t_0]$. Without loss of generality, $\Xi$ can be chosen to be a class
  $\Kc$ function of $|x(0)| + U(0)$. Thus,
  \begin{align}\label{eq:xuK}
    |x(t)| + &U(t) \le \Xi(|x(0)| + U(0)) + U(0)
    \\
    \notag &\ \ \le \big[\Xi(|x(0)| + U(0)) + U(0)\big] e^{-(t - t_0)}, \quad t \in [0, t_0].
  \end{align}
  As soon as the controller receives $x(0)$ at $t_0$, it can estimate
  the state $x(t)$ by simulating the dynamics~\eqref{eq:dynamics},
  i.e.,
  \begin{align}\label{eq:est}
    x(t) = x(0) + \int_{0}^t f(x(s), u(\phi(s))) d s.
  \end{align}
  This estimation is updated whenever a new state $x(\tau_\ell)$
  arrives and used
    to compute the predictor~\eqref{eq:p}, which combined
    with~\eqref{eq:est} takes the form~\eqref{eq:p-psi}. Since the
    controller now has access to the same prediction signal $p(t)$ as
    before, the same Lyapunov analysis as above holds for $[t_0,
    \infty)$. Therefore, let $\hat \beta \in \Kc \Lc$ be such
    that~\eqref{eq:cl-as} holds for $t \ge t_0$. By~\eqref{eq:xuK},
    \begin{align*}
      |x(t)| + U(t) \le \hat \beta \big(\Xi(|x(0)| + U(0)) + U(0), t - t_0\big) \quad t \ge t_0.
    \end{align*}
    Therefore, \eqref{eq:cl-as} holds by choosing
    $\beta(r, t) = \max\big\{\hat \beta \big(\Xi(r) + r, t - t_0\big), \big[\Xi(r) + r\big] e^{-(t - t_0)}\big\}$.
    Finally, since the triggering condition~\eqref{eq:trig} has not
    changed, $t_{k + 1} - t_k \ge \delta, k \ge 0$ for the same
    $\delta > 0$ as in Theorem~\ref{thm:zeno&gas}.
  \end{proof}

  \new{
    \begin{remark}\longthmtitle{Separation of sensing and actuation
        delays}
      \rm It is a standard practice in the literature to combine the
      sensing and actuation delays into a single quantity, i.e.,
      ``networked induced delays". This is in fact the basis of the
      predictor design in equation~(23). However, in our treatment, it
      is beneficial to keep the two delays distinct since their
      sources are often physically distinct and the assumptions on the
      sensing delay $\psi$ are significantly weaker than on the
      actuator delay $\phi$ (cf. Assumption~\ref{assum}). \oprocend
    \end{remark}
  }

  \new{
    \begin{remark}\longthmtitle{Practical importance of
        feedback}\label{rem:fb}
  \rm While the controller can theoretically
  discard 
  $\{x(\tau_\ell)\}_{\ell = 1}^\infty$ and rely on $x(0)$ for
  estimating the state at all future times, closing the loop using the
  most recent state value $x(\tau_{\bar \ell})$ is in practice
  critical for 
  preventing the estimator~\eqref{eq:est} from drifting due to noise
  and un-modeled dynamics, even when the system dynamics are perfectly
  known. This is apparent, for instance, in Example~\ref{ex:1} shown
  later, where facing the errors caused by the numerical approximation
  of the prediction signal. \oprocend
\end{remark}
}

\section{The Linear Case}

\nnew{Here, we specialize the general treatment of
  Section~\ref{sec:control} to the linear case
\begin{align}\label{eq:lin-dynamics}
\!\!\!\dot x(t) = A x(t) + B u(\phi(t)), \quad \text{a.a. } t \ge 0, \quad x(0) = x_0.
\end{align}
For simplicity, we restrict our attention
to the perfect sensing case, with similar generalizations to sampled and delayed sensing as in Section~\ref{sec:sensing}.}
Assuming that the pair $(A, B)$ is stabilizable, we can use pole
placement to find a linear feedback law $K:\real^n \to \real$ that
\new{satisfies
  Assumption~\ref{assum}\ref{item:assum-iss}.} 
Moreover, $p(t)$ \nnew{has the explicit
  form} 
\begin{align}\label{eq:lin-p}
  p(t) = \new{e^{A(\sigma(t) - t_+)} x(t_+) + \int_{\phi(t_+)}^t} \!\!\dot
  \sigma(s) e^{A(\sigma(t) - \sigma(s))} B u(s) d s,
\end{align}
for all $t \ge \phi(0)$ and the closed-loop system takes the form
\begin{align*}
  \dot x(t) &= (A + BK) x(t) + B w(\phi(t)) + B K e(\phi(t)).
\end{align*}
Furthermore, given an arbitrary $Q = Q^T > 0$, the continuously
differentiable function $S: \real^n \to \real$ is $ S(x) = x^T P x$,
where $P = P^T > 0$ is the unique solution to the Lyapunov equation $
(A + B K)^T P + P (A + B K) = -Q$.  Clearly,~\eqref{eq:alpha12} holds
with $\alpha_1(r) = \lambda_\text{min}(P) r^2$ and $\alpha_2(r) =
\lambda_\text{max}(P) r^2$. Also, using
Young's inequality~\citep{WHY:12},
\begin{align*}
 \Lc_f S = -x(t)^T Q x(t)
  + 2 x(t)^T P B (w(\phi(t)) + K e(\phi(t))),
\end{align*}
so~\eqref{eq:Lf_S} holds with $\gamma(r) = \frac{1}{2}
\lambda_\text{min}(Q) r^2$ and $\rho(r) = \frac{2 |P
  B|^2}{\lambda_\text{min}(Q)} r^2$. \nnew{Thus,
\eqref{eq:trig} also takes the simpler form
\begin{align}\label{eq:lin-trig}
  |e(t)| \le \frac{\lambda_\text{min}(Q) \sqrt\theta}{4 |P B| |K|}
  |p(t)|.
\end{align}
In addition to these simplifications, we show next that the closed-loop
system is globally exponentially stable.} 
  
\subsection{Exponential Stability under Event-triggered Control}

\nnew{We next show that, in the linear case, we obtain the stronger feature
of global exponential stability using a slightly
different Lyapunov-Krasovskii functional.}

\begin{theorem}\longthmtitle{Exponential Stabilization}\label{thm:es}
  The system~\eqref{eq:lin-dynamics} subject to the piecewise-constant
  closed-loop control $ u(t) = K p(t_k)$, $ t \in [t_k, t_{k+1})$,
  with $p(t)$ given in~\eqref{eq:lin-p} and $\{t_k\}_{k=1}^\infty$
  determined according to~\eqref{eq:lin-trig} satisfies
  \begin{align*}
    |x(t)|^2 + \int_{\phi(t)}^t \!\!\!u(\tau)^2 d \tau \le C e^{-\mu
      t} \Big(|x(0)|^2 + \int_{\phi(0)}^0 \!\!\!u(\tau)^2 d \tau
    \Big)\!,
  \end{align*}
  for some $C > 0$, $\mu = \frac{(2 - \theta) \lambda_\text{min}(Q)}{4
    \lambda_\text{max}(P)}$, and all $t \ge 0$.
\end{theorem}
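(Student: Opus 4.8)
The plan is to exhibit a Lyapunov--Krasovskii functional
\[
  V(t) = x(t)^T P x(t) + \eta \int_{\phi(t)}^t e^{-\mu(t-\tau)} u(\tau)^2 \, d\tau,
\]
for a small constant $\eta > 0$, and to show that it satisfies $\dot V(t) \le -\mu V(t)$ with exactly the claimed rate $\mu = \tfrac{(2-\theta)\lambda_\text{min}(Q)}{4\lambda_\text{max}(P)}$. Since $e^{-\mu(t-\phi(t))} \in [e^{-\mu M_0}, 1]$ by~\eqref{eq:bounds-delay}, this functional is sandwiched between positive multiples of $|x(t)|^2 + \int_{\phi(t)}^t u(\tau)^2 d\tau$ (with constants built from $\lambda_\text{min}(P)$, $\lambda_\text{max}(P)$, $\eta$, and $e^{-\mu M_0}$), so a bound of the form $V(t) \le e^{-\mu t} V(0)$ translates directly into the stated estimate. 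The initial control energy $\int_{\phi(0)}^0 u(\tau)^2 d\tau$ on the right-hand side enters naturally through the integral term of $V(0)$.

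First I would restrict attention to $t \ge \bar t$, where $w \equiv 0$ and the closed loop reads $\dot x(t) = (A+BK)x(t) + BK e(\phi(t))$, so that $\dot S(x(t)) = -x(t)^T Q x(t) + 2 x(t)^T PB K e(\phi(t))$ for $S(x) = x^T P x$. The key identity is $p(\phi(t)) = x(\sigma(\phi(t))) = x(t)$, which turns the trigger~\eqref{eq:lin-trig} into the pointwise bound $|e(\phi(t))| \le \tfrac{\lambda_\text{min}(Q)\sqrt{\theta}}{4|PB||K|}|x(t)|$. Applying Young's inequality to the cross term with the same split used to obtain the linear form of~\eqref{eq:Lf_S} and inserting this bound yields $\dot S \le -\tfrac{4-\theta}{8}\lambda_\text{min}(Q)|x|^2$. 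Comparing this against $\mu S \le \mu \lambda_\text{max}(P)|x|^2 = \tfrac{4-2\theta}{8}\lambda_\text{min}(Q)|x|^2$ gives the sharpened inequality $\dot S \le -\mu S - \tfrac{\theta}{8}\lambda_\text{min}(Q)|x|^2$: the quadratic part already decays strictly faster than $\mu$, leaving a spare negative term of size $\tfrac{\theta}{8}\lambda_\text{min}(Q)|x|^2$.

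Differentiating the integral term gives $\dot V = \dot S + \eta u(t)^2 - \eta \dot\phi(t) e^{-\mu(t-\phi(t))} u(\phi(t))^2 - \eta\mu \int_{\phi(t)}^t e^{-\mu(t-\tau)} u(\tau)^2 d\tau$. The last integral reconstitutes $-\mu$ times the integral part of $V$, and the exiting term is nonpositive (since $\dot\phi(t) \ge m_2 > 0$) and can be discarded; the obstruction is the incoming control energy $\eta u(t)^2$, which through $u(t) = K(p(t)+e(t))$ depends on the \emph{predicted} (hence future) state $p(t) = x(\sigma(t))$ rather than on $x(t)$. I expect this to be the main difficulty, and I would resolve it by observing that the bound $\dot S \le -\tfrac{4-\theta}{8}\lambda_\text{min}(Q)|x|^2 \le 0$ makes $S$ nonincreasing for $t \ge \bar t$; since $\sigma(t) \ge t$, this forces $\lambda_\text{min}(P)|x(\sigma(t))|^2 \le S(\sigma(t)) \le S(t) \le \lambda_\text{max}(P)|x(t)|^2$, and hence $u(t)^2 \le |K|^2(1+c)^2 \tfrac{\lambda_\text{max}(P)}{\lambda_\text{min}(P)}|x(t)|^2$ with $c = \tfrac{\lambda_\text{min}(Q)\sqrt{\theta}}{4|PB||K|}$ the constant in~\eqref{eq:lin-trig}. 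Choosing $\eta$ small enough that $\eta |K|^2(1+c)^2 \lambda_\text{max}(P)/\lambda_\text{min}(P) \le \tfrac{\theta}{8}\lambda_\text{min}(Q)$, the incoming term is exactly absorbed by the spare margin, yielding $\dot V \le -\mu V$ on $[\bar t, \infty)$.

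Finally I would close the argument on the initial window $[0,\bar t] \subseteq [0,\sigma(0)]$, where $w$ may be nonzero. Here no finite escape occurs, so $x$ and the control energy remain bounded by a $\Kc$-function of $|x(0)| + \sup_{\phi(0) \le \tau \le 0} |u(\tau)|$; because $\sigma(0) \le M_0$ bounds the window length uniformly and $V$ jumps downward across $\bar t$ exactly as in Proposition~\ref{prop:ub-V-dot}, one obtains $V(\bar t) \le C_1 \big(|x(0)|^2 + \int_{\phi(0)}^0 u(\tau)^2 d\tau\big)$. Integrating $\dot V \le -\mu V$ from $\bar t$ then gives $V(t) \le e^{-\mu(t-\bar t)} V(\bar t)$, and combining this with the two-sided equivalence between $V$ and $|x|^2 + \int_{\phi(t)}^t u^2$ produces the claimed exponential estimate, with $\mu = \tfrac{(2-\theta)\lambda_\text{min}(Q)}{4\lambda_\text{max}(P)}$ and a constant $C$ that absorbs the bounded transient factor $e^{\mu \bar t} \le e^{\mu M_0}$.
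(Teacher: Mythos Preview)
Your approach is correct and genuinely different from the paper's. The paper takes as Lyapunov--Krasovskii functional
\[
  V(t) = x(t)^T P x(t) + \frac{4|PB|^2}{\lambda_\text{min}(Q)} \int_t^{\sigma(t)} e^{b(\tau-t)} w(\phi(\tau))^2\, d\tau,
\]
i.e., it augments $S$ with a weighted $L^2$ energy of the \emph{mismatch} $w$ over the future window $[t,\sigma(t)]$, obtains $\dot V \le -\mu V$ directly for all $t\ge 0$ (the $w$-term absorbs the transient phase uniformly), and only afterwards invokes external norm-equivalence results from~\cite{NBL-MK:13} to translate the decay of $V$ into the decay of $|x|^2 + \int_{\phi(t)}^t u^2$. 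By contrast, you augment $S$ with the control energy itself, so your $V$ is \emph{directly} equivalent to the target norm and no outside lemmas are needed; the price is that the incoming term $\eta u(t)^2$ involves the future state $p(t)=x(\sigma(t))$, which you handle with the neat monotonicity trick $S(\sigma(t))\le S(t)$. Your bound $\dot S \le -\tfrac{4-\theta}{8}\lambda_\text{min}(Q)|x|^2$ on the $w\equiv 0$ regime is in fact sharper than the paper's $-\tfrac{2-\theta}{4}\lambda_\text{min}(Q)|x|^2$ (the latter ``wastes'' a Young split on the now-absent $w$ term), and this extra margin is precisely what lets you absorb $\eta u(t)^2$ while still hitting the same rate $\mu$. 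The trade-off is that the paper's functional handles the initial interval $[0,\sigma(0)]$ seamlessly, whereas you must treat it separately.

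Two minor points on that last step. First, your $V$ is continuous in $t$ (both $S(x(t))$ and the integral are), so there is no jump at $\bar t$; the reference to Proposition~\ref{prop:ub-V-dot} is unnecessary here. Second, bounding $V(\bar t)$ by $|x(0)|^2+\int_{\phi(0)}^0 u^2$ requires a little care because $\int_{\phi(\bar t)}^{\bar t} u^2 = \int_0^{\sigma(0)} u^2$ involves $u$ on $[0,\sigma(0)]$, which in turn depends on $p(t)=x(\sigma(t))$ with $\sigma(t)$ possibly exceeding $\sigma(0)$. This is not a real obstacle: first bound $|x|$ on $[0,\sigma(0)]$ by the linear variation-of-constants formula (using only the given initial control on $[\phi(0),0]$), then your already-established monotonicity of $S$ on $[\sigma(0),\infty)$ bounds $|x(\sigma(t))|$ for all $t\ge 0$ in terms of $|x(\sigma(0))|$, and finally $|u(t)|\le |K|(1+c)|x(\sigma(t))|$ closes the loop. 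This gives the quadratic bound $V(\bar t)\le C_1\big(|x(0)|^2+\int_{\phi(0)}^0 u^2\big)$ you need.
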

\begin{proof}
  For $t \ge 0$, let
    $L(t) = \int_t^{\sigma(t)} e^{b(\tau - t)} w(\phi(\tau))^2 d \tau$.
  One can see that $ \dot L(t) = -w(\phi(t))^2 - b L(t)$, $ t \ge 0$.
  Define
    $V(t) = x(t)^T P x(t) + \frac{4 |P B|^2}{\lambda_\text{min}(Q)}
    L(t)$.
  Therefore, using~\eqref{eq:lin-trig},
  \begin{align*}
    \dot V(t) &= -x(t)^T Q x(t) + 2 x(t)^T P B w(\phi(t)) - \frac{4 |P B|^2 b}{\lambda_\text{min}(Q)} L(t)
    \\
    &\quad + 2 x(t)^T P B K e(\phi(t)) - \frac{4 |P
      B|^2}{\lambda_\text{min}(Q)} w(\phi(t))^2
\\
    &\le -\frac{2 - \theta}{4} \lambda_\text{min}(Q) |x(t)|^2 -
    \frac{4 |P B|^2 b}{\lambda_\text{min}(Q)} L(t)
    \le - \mu V(t),
  \end{align*}
  where $\mu = \min\big\{\frac{(2 - \theta) \lambda_\text{min}(Q)}{4
    \lambda_\text{max}(P)}, b\big\} = \frac{(2 - \theta)
    \lambda_\text{min}(Q)}{4 \lambda_\text{max}(P)}$ if $b$ is chosen
  sufficiently large. Hence, by the Comparison Principle~\citep[Lemma
  3.4]{HKK:02}, we have $ V(t) \le e^{-\mu t} V(0)$, $ t \ge 0$.  Let
  $W(t) = |x(t)|^2 + \int_{\phi(t)}^t u(\tau)^2 d
  \tau$. From~\citep[Eq. (6-99)-(6-100)]{NBL-MK:13}, $c_1 W(t) \le
  V(t) \le c_2 W(t)$, for some $c_1, c_2 > 0$ and all $t \ge
  0$. Hence, the result follows with $C = {c_2}/{c_1}$.
\end{proof}

From Theorem~\ref{thm:es}, the convergence rate $\mu$ depends both on
the ratio $\frac{\lambda_\text{min}(Q)}{\lambda_\text{max}(P)}$ and
the parameter $\theta$. The former can be increased by placing the
eigenvalues of $A + B K$ at larger negative values, though large
eigenvalues result in noise amplification. Decreasing $\theta$,
however, comes at the cost of faster control updates, a trade-off we
study~next.

\subsection{Optimizing the Sampling-Convergence Trade-off}

\nnew{Here, we analyze the trade-off between sampling frequency and
  convergence speed. 
  In general, it is clear from the Lyapunov analysis of
  Section~\ref{sec:control} that more updates \nnew{(intuitively
    corresponding to smaller $\theta$)} hasten the decay of $V(t)$ and
  help convergence.
Let $\delta$ be the time that it takes
for the solution of~\eqref{eq:r-dot} to go from $0$ to $\frac{1}{2
  L_{\gamma^{-1} \rho/\theta} L_K}$. As shown in
Section~\ref{subsec:trig}, the inter-event times are lower bounded by
$\delta$, so it can be used to bound the sampling cost of
  implementing the controller.}  Let
\begin{align*}
  a = M_2 L_f L_K, \quad\!\! c = M_2 L_f (1 + L_K), \quad\!\! R =
  \frac{1}{2 L_{\gamma^{-1} \rho/\theta} L_K},
\end{align*}
where $L_f = \sqrt2 (|A| + |B|)$, $L_K = |K|$, and $L_{\gamma^{-1}
  \rho/\theta} = \frac{2 |P B|}{\lambda_\text{min}(Q)
  \sqrt\theta}$. Then, the solution of~\eqref{eq:r-dot} with initial
condition $r(0) = 0$ is 
$r(t) = \frac{c e^{a t} - c e^{c
    t}}{a e^{c t} - c e^{a t}}$, so solving $r(\delta) = R$ for $\delta$
gives
  $\delta = \frac{\ln\frac{c + R a}{c + R c}}{a - c}$.
The objective is to maximize $\delta$ and $\mu$ by tuning the
optimization variables $\theta$ and $Q$. For simplicity, let
 $ \theta = \nu^2$ and $Q = q I_n$
 where $\nu, q > 0$.  Then,
\begin{align*}
  \delta(\nu) = \frac{1}{a - c} \ln \frac{c + \frac{\nu}{|P_1 B| |K|}
    a}{c + \frac{\nu}{|P_1 B| |K|} c}, \quad \mu(\nu) = \frac{2 -
    \nu^2}{4 \lambda_\text{max}(P_1)},
\end{align*}
where $P_1 = q^{-1} P$ is the solution of the Lyapunov equation $(A + B
K)^T P_1 + P_1 (A + B K) = -I_n$. Figure~\ref{fig:opt}(a) depicts
$\delta$ and $\mu$ as functions of $\nu$ and illustrates the
sampling-convergence trade-off.

\begin{figure}
  \centering
  \includegraphics[width = 0.505\linewidth]{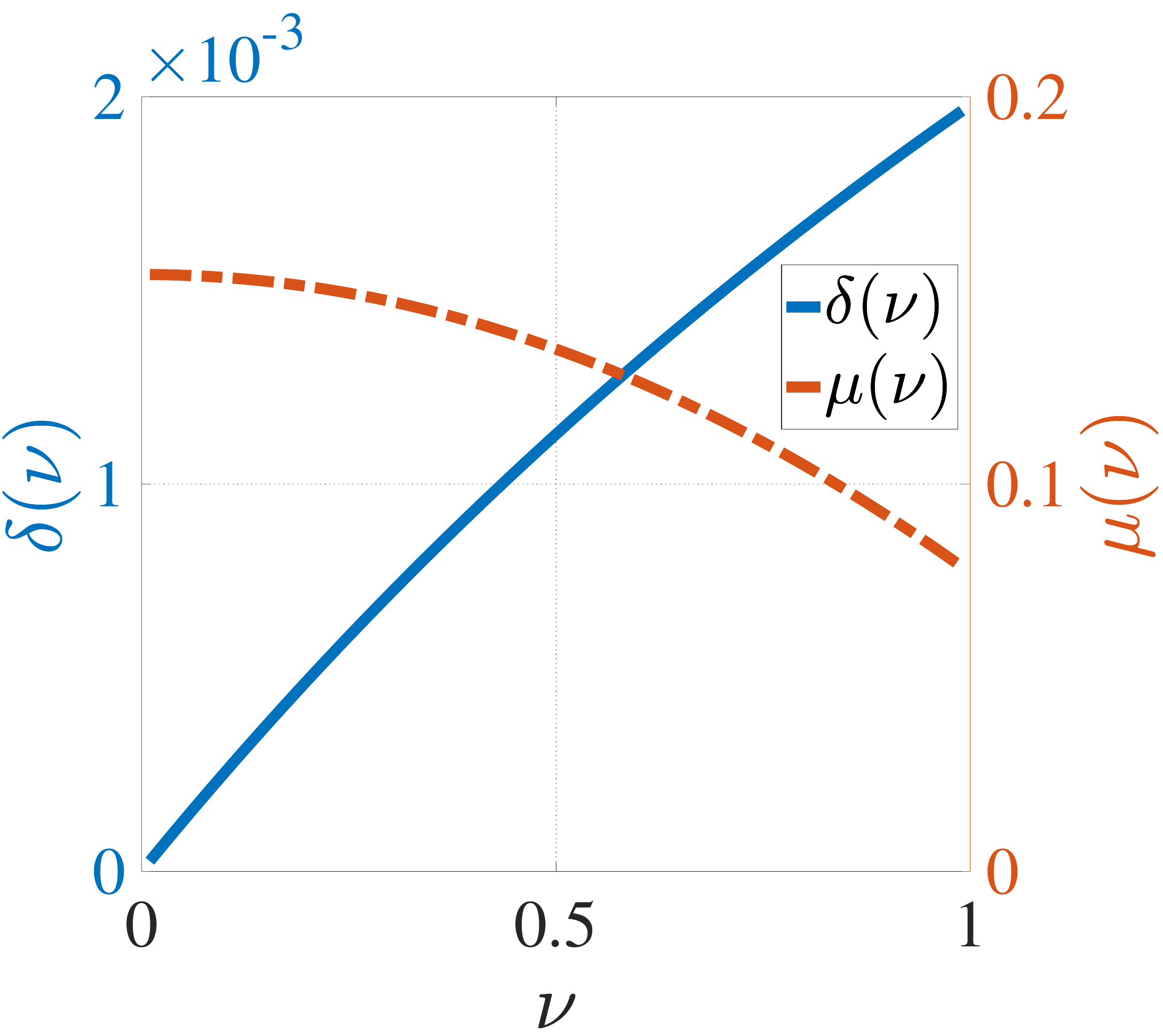}
  \includegraphics[width = 0.425\linewidth]{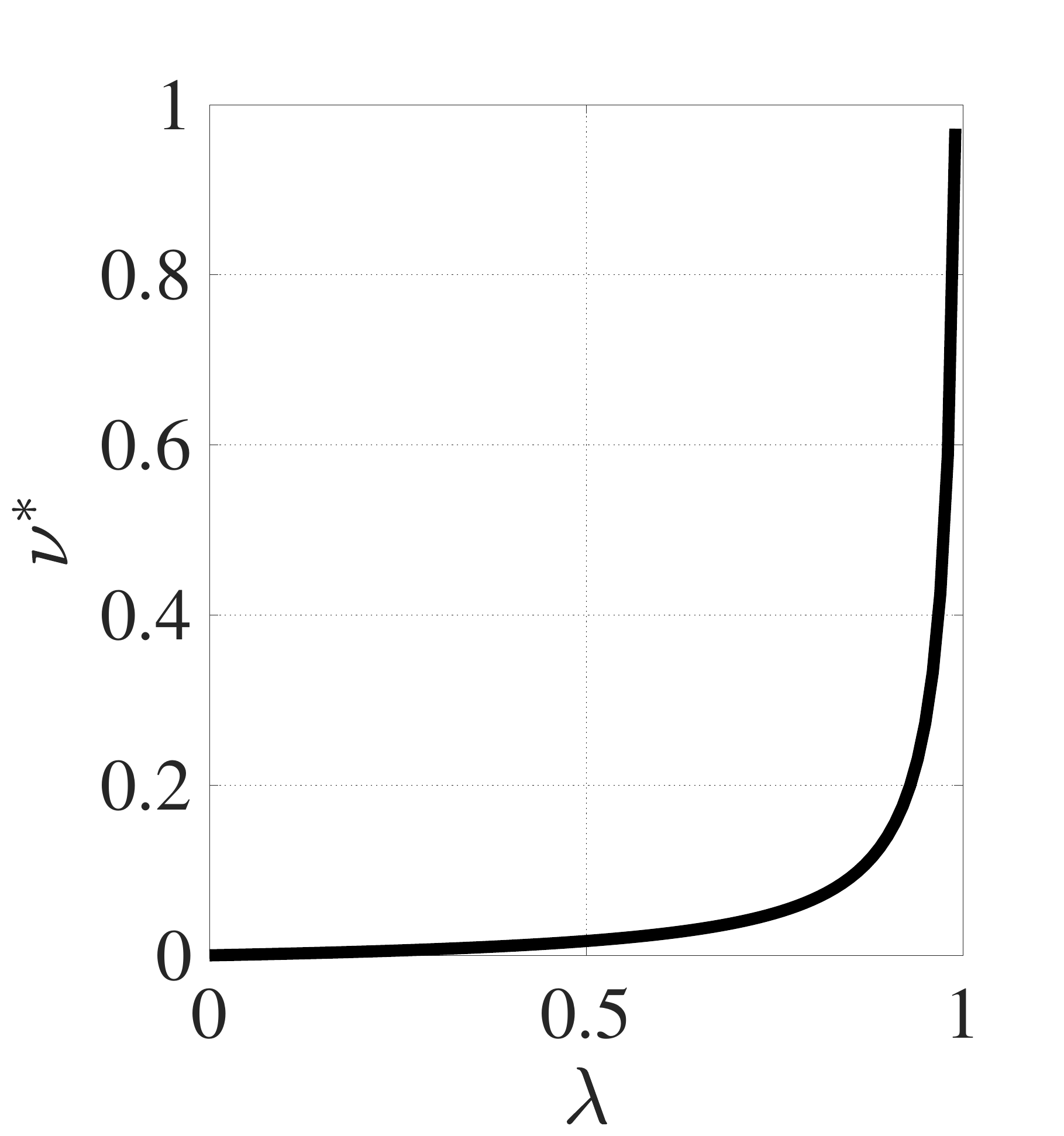}
  \caption{Sampling-convergence trade-off for event-triggered control
    of linear systems. %
    \textbf{(Left)}, values of the lower bound of the inter-event
    times ($\delta$) and exponential rate of convergence ($\mu$) for
    different values of the optimization parameter $\nu$ for a
    3rd-order unstable linear system with $M_2 = 1$. %
    \textbf{(Right)}, unique maximizer $\nu^*$ of the objective
    function $J(\nu)$ for different values of~$\lambda$. As $\lambda$
    goes from $0$ to $1$, more weight is given to the maximization of
    $\delta$, which increases $\nu^*$.}\label{fig:opt}
\vspace*{-1ex}
\end{figure}

To balance these two objectives, we define the aggregate objective
function as a convex combination of $\delta$ and $\mu$, 
\begin{align*}
  J(\nu) = \lambda \delta(\nu) + (1 - \lambda) \mu(\nu),
\end{align*}
\nnew{where $\lambda \in [0, 1]$ determines the relative importance of
  convergence rate and sampling.
  The function $J$ is strongly convex and its unique maximizer is the
  positive real solution of
  $c_3 \nu^3 + c_2 \nu^2 + c_1 \nu + c_0 = 0$,
  where $c_3 = a (1 - \lambda)$, $c_2 = (a + c)|P_1 B| |K|(1 -
  \lambda)$, $c_1 = c |P_1 B|^2 |K|^2(1 - \lambda)$, and $c_0 = -2
  \lambda_\text{max}(P_1) |P_1 B| |K|
  \lambda$. Figure~\ref{fig:opt}(b) plots this
  maximizer 
  for different values of 
  $\lambda$, further illustrating the sampling-convergence trade-off.}

\section{Simulations}\label{sec:sims}

Here we illustrate the performance of our event-triggered
predictor-based design. Example~\ref{ex:1} is a two-dimensional
nonlinear system that satisfies all the hypotheses required to ensure
global asymptotic convergence of the closed-loop
system. Example~\ref{ex:2} is a different two-dimensional nonlinear
system \nnew{which, instead,} does not but for which we observe convergence in
simulation. We start by discussing some numerical challenges
  that arise because of the particular hybrid nature of our design,
  along with our approach to tackle them.

 \begin{remark}\longthmtitle{Numerical implementation of event-triggered
     control law}\label{rem:num}
   \rm The main challenge in the numerical simulation of the proposed
     event-trigger law is the computation of the prediction signal
     $p(t) = x(\sigma(t))$. To this end, at least three methods can be
     used, as follows:\newline%
     (i) \emph{Open-loop}: One can solve $\dot p(t) = \dot \sigma(t)
     f(p(t), u(t))$ directly starting from $p(\phi(0)) = x(0)$. The
     closed-loop system takes the form of a hybrid system
     \new{(see, e.g.,~\citep{RG-RGS-ART:12} for an introduction to
       hybrid systems)} with flow map
   \begin{subequations}\label{eq:perturb}
     \begin{align}
       \label{eq:x-dot} \dot x(t) &= f(x(t),
       u(\phi(t))), 
       \\
       \label{eq:p-dot} \dot p(t) &= \dot \sigma(t) f(p(t), u(t)), 
       \\
       \dot p_{tk}(t) &= 0, 
       \\
       u(t) &= K(p_{tk}(t)), 
     \end{align}
   \end{subequations}
   jump map \new{$p_{tk}((t_k)_+) = p((t_k)_+)$}, jump set $D =
   \setdefb{(x, p, p_{tk})}{|p_{tk} - p| = \frac{\rho^{-1}(\theta
       \gamma(|p|))}{2 L_K}}$,
    and flow set $C = \overline{\real^{3n} \setminus D}$.  \nnew{Note
     that $x$ and $p$ do not change at jumps (i.e., identity
     maps). Here, the values of $p$ (resp. $p_{tk}$ and $u$) of any
     hybrid solution are arbitrary in the interval $[0,\phi(0))$
     (resp. $[0,t_0)$).}
   This formulation is computationally efficient but, if the original
   system is unstable, it is prone to numerical instabilities. The
   reason, suggesting the name ``open-loop'', is that the $(p,
   p_{tk})$-subsystem is completely decoupled from the
   $x$-subsystem. Therefore, as stated in Remark~\ref{rem:fb}, if any
   mismatch occurs between $x(t)$ and $p(\phi(t))$ due to numerical
   errors, the $x$-subsystem tends to become unstable, and this is not
   ``seen'' by the $(p, p_{tk})$-subsystem.
   \newline
   (ii) \emph{Semi-closed-loop}: One can add a feedback path from the
   $x$-subsystem to the $(p, p_{tk})$ subsystem by computing $p$
   directly from~\eqref{eq:p-psi} \new{every time a new state value
     arrives (i.e., at every
     $\psi^{-1}(\tau_\ell)$).} 
   This requires a numerical integration of $f(p(s), u(s))$ over the
   ``history'' of $(p, u)$ from $\phi(\tau_{\bar \ell})$ to $t$. This
   method is more computationally expensive but improves the numerical
   robustness. However, since we are still integrating over the
   history of $p$, any mismatch in the prediction takes more time to
   die out, which may not be tolerable for an unstable system.
   \newline
   (iii) \emph{Closed-loop}: To further increase robustness, one can
   solve the differential form in~\eqref{eq:p-dot} rather than the
   integral form in~\eqref{eq:p-psi} \new{every time a new state value
     arrives (i.e., at every
     $\psi^{-1}(\tau_\ell)$)} 
   from $\phi(\tau_{\bar \ell})$ to $t$ with ``initial'' condition
   $p(\phi(\tau_{\bar \ell})) = x(\tau_{\bar \ell})$. \new{This method
     is as computationally expensive as (ii) but is considerably more
     robust. This is therefore the recommended method for the
     numerical implementation of the proposed predictor-based
     controller and used below} in Examples~\ref{ex:1}
   and~\ref{ex:2}.  \oprocend
 \end{remark}

\begin{example}\longthmtitle{Compliant Nonlinear System}\label{ex:1}
  {\rm 
  Consider the 2-dimensional system given by
\begin{align*}
    f(x, u) &= \begin{bmatrix}
        x_1 + x_2
        \\
        \tanh(x_1) + x_2 + u
      \end{bmatrix}, \ \ \phi(t) = t - \frac{(t - 5)^2 + 2}{2 (t - 5)^2 + 2},
    \\
    \tau_\ell &= \ell \Delta_\tau, \quad \ell \ge 0, \hspace{42pt} \psi(t) = t - D_\psi,
  \end{align*}
  where $\Delta_\tau$ and $D_\psi$ are constants. This system
  satisfies
  \new{Assumption~\ref{assum}} 
  with the feedback law $K(x) = -6x_1 - 5 x_2 - \tanh(x_1)$, $S(x) =
  x^T P x$, and
  \begin{align*}
    & \new{L_f = \frac{\sqrt{2\sqrt{17}+10}}{2}, \quad\! L_K = \sqrt{74}} , \quad\!\! (M_1, m_2) = 1 \pm \frac{3 \sqrt 3}{16},
    \\
    & M_0 = 1, \quad \gamma(r) =
    \frac{\lambda_\text{min}(Q)}{2} r^2, \quad \rho(r) = \frac{2 |P
      B|^2}{\lambda_\text{min}(Q)} r^2,
  \end{align*}
  where $P = P^T > 0$ is the solution of $(A + B k)^T P + P (A + B k)
  = -Q$ for $A = [1 \ 1;\ 0 \ 1]$, $B = [0;\ 1]$, $k = [-6 \ -5]$, and
  arbitrary $Q = Q^T > 0$ \new{(we use $Q = I$)}. A sample simulation
  result of this system is depicted in Figure~\ref{fig:ex1}(a).
  It is to be noted that for this example, \eqref{eq:trig} simplifies
  to $|e(t)| \le \overline \rho |p(t)|$ with \new{$\overline \rho = 0.022$},
  but the closed-loop system remains stable when increasing \new{$\overline
  \rho$ about until $0.8$ (Figure~\ref{fig:ex1}(b))}. 
  
\new{While Theorem~\ref{thm:zeno&gas} guarantees the global asymptotic
  stability of the continuous-time system, discretization
  accuracy/error plays an important role in its digital
  implementation. It is with this in mind that one should interpret
  Figure~\ref{fig:ex1}(c), where depending on the discretization
  scheme and the stepsize employed, the numerical approximation errors
  in computing the prediction signal, cf. Remark~\ref{rem:num}, make
  the evolution of the Lyapunov function $V$ not monotonically
  decreasing (whereas we know from Theorem~\ref{thm:zeno&gas} that it
  is monotonically decreasing for the continuous-time system).  We see
  that, at least for this example, the effect on the evolution of $V$
  is sensitive to both the order of discretization and the stepsize
  ($h$), and benefits more from decreasing the latter.}
  
\new{Stability is also critically dependent on the sensing sampling
  rate $1/\Delta_\tau$, as noted in Remark~\ref{rem:fb}. We can also
  see from Figure~\ref{fig:ex1}(c) that the decay of $V$ clearly
  deteriorates for large $\Delta_\tau$ (insufficient sampling) due to
  (in this example only discretization) noise but can be made
  monotonic for sufficiently small $\Delta_\tau$.  To visualize this
  effect on stability more systematically}, we varied $\Delta_\tau$
and $D_\psi$ and computed $|x(25)|$ as a measure of asymptotic
stability. The average result is depicted in Figure~\ref{fig:ex1}(d)
for $10$ random initial conditions, showing that unlike our
theoretical expectation, large $\Delta_\tau$ and/or $D_\psi$ result in
instability even in the absence of noise because of the numerical
error that degrades the estimation~\eqref{eq:est} over time
(c.f. Remark~\ref{rem:num}).  \new{Nevertheless, taking the delays and
  sampling into account while designing the controller using the
  predictor-based scheme~\eqref{eq:u2} significantly increases the
  robustness of the closed-loop system relative to a design that is
  oblivious to delays and sampling. As shown in~\citep{FM-MM-TND:13},
  the asymptotic stability of the latter can only be guaranteed for
  this example \emph{without actuation delays and event-triggering} if
  $\Delta_\tau + D_\psi \le 7.1 \times 10^{-3}$ (given that, using the
  notation therein, we have $c_1 = 25, c_2 = 29/9, c_3 = 772$), which
  is more than two orders of magnitude more conservative than the
  empirical bound shown in Figure~\ref{fig:ex1}(d).}
  
\new{Finally, we have investigated the robustness of the closed-loop
  system to external disturbances (which are not theoretically
  included in our analysis but inevitably exist in practice). In an
  event-triggered system, disturbances may lead to instability and/or
  Zeno behavior, cf.~\citep{VSD-DPB-WPMHH:17}. However, as shown in
  Figure~\ref{fig:ex1}(e-f), neither instability nor Zeno behavior
  occurs when adding (any strength of) the disturbance here,
  highlighting the practical relevance of the proposed event-triggered
  scheme.}  }\oprocend
\end{example}

\begin{figure*}
  \centering
\subfloat[]{\small 
\includegraphics[width=0.32\linewidth]{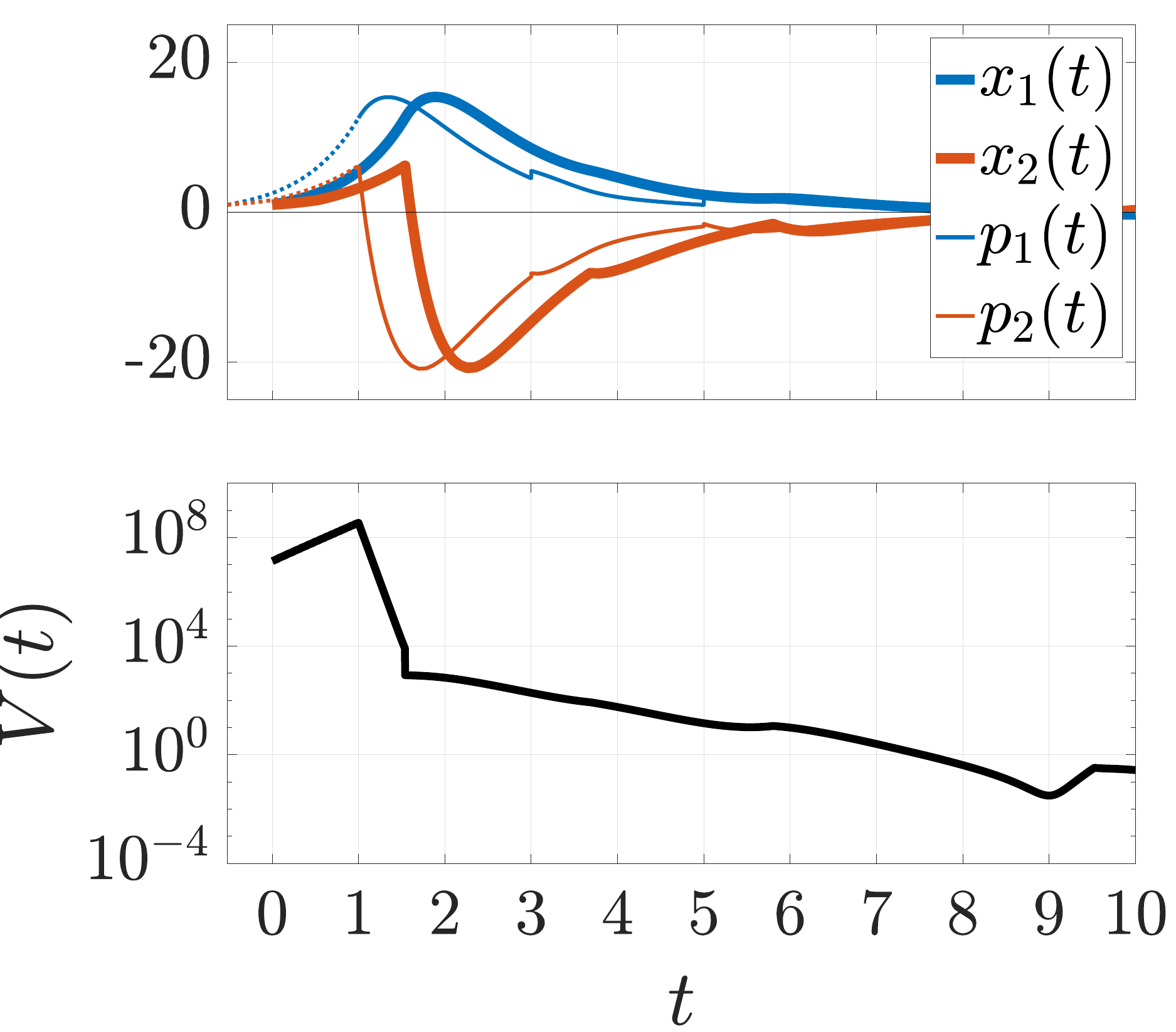}
}
\hspace{10pt}
\subfloat[]{\small 
\includegraphics[width=0.305\linewidth]{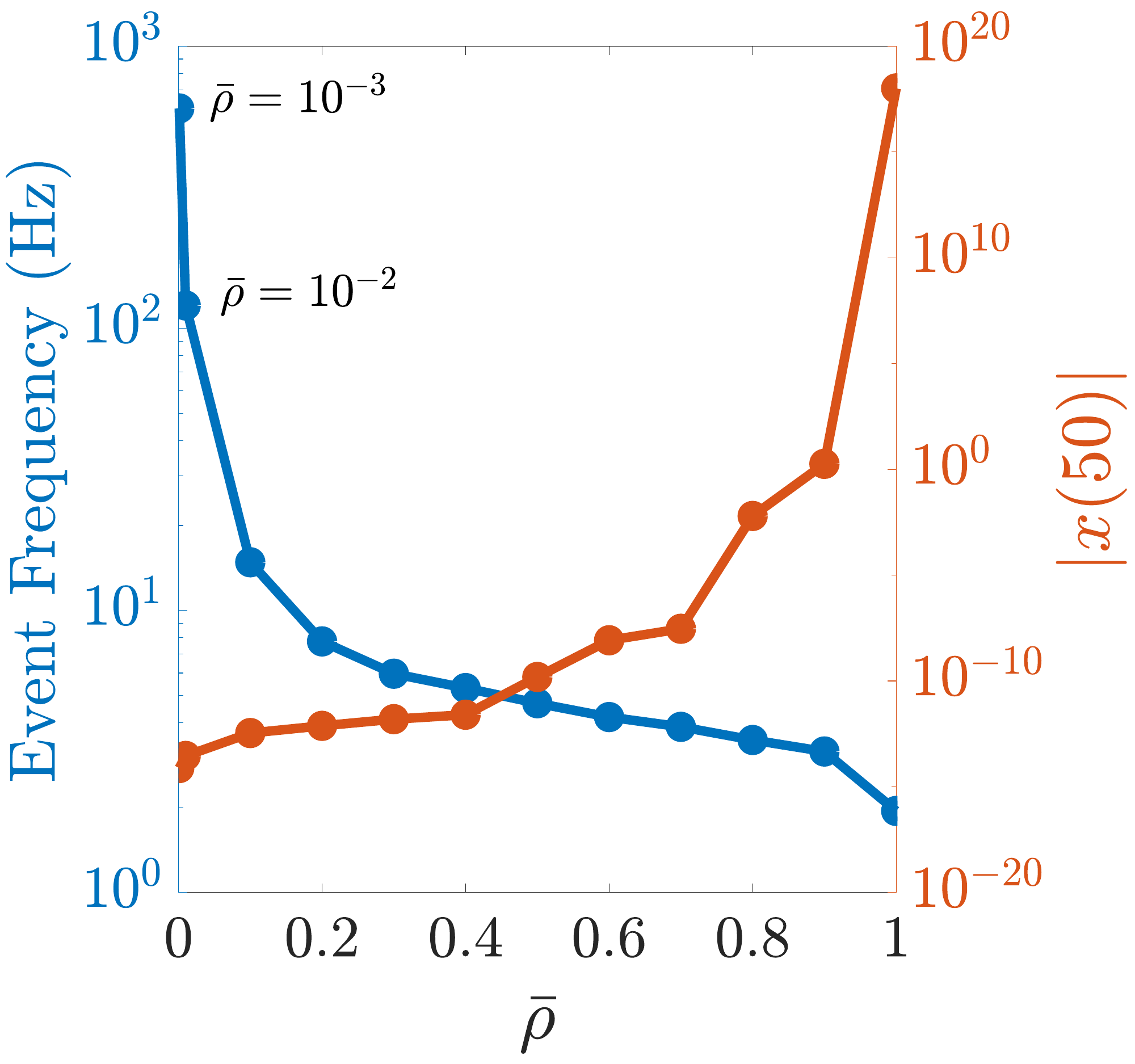}
}
\hspace{3pt}
\subfloat[]{\small 
\includegraphics[width=0.29\linewidth]{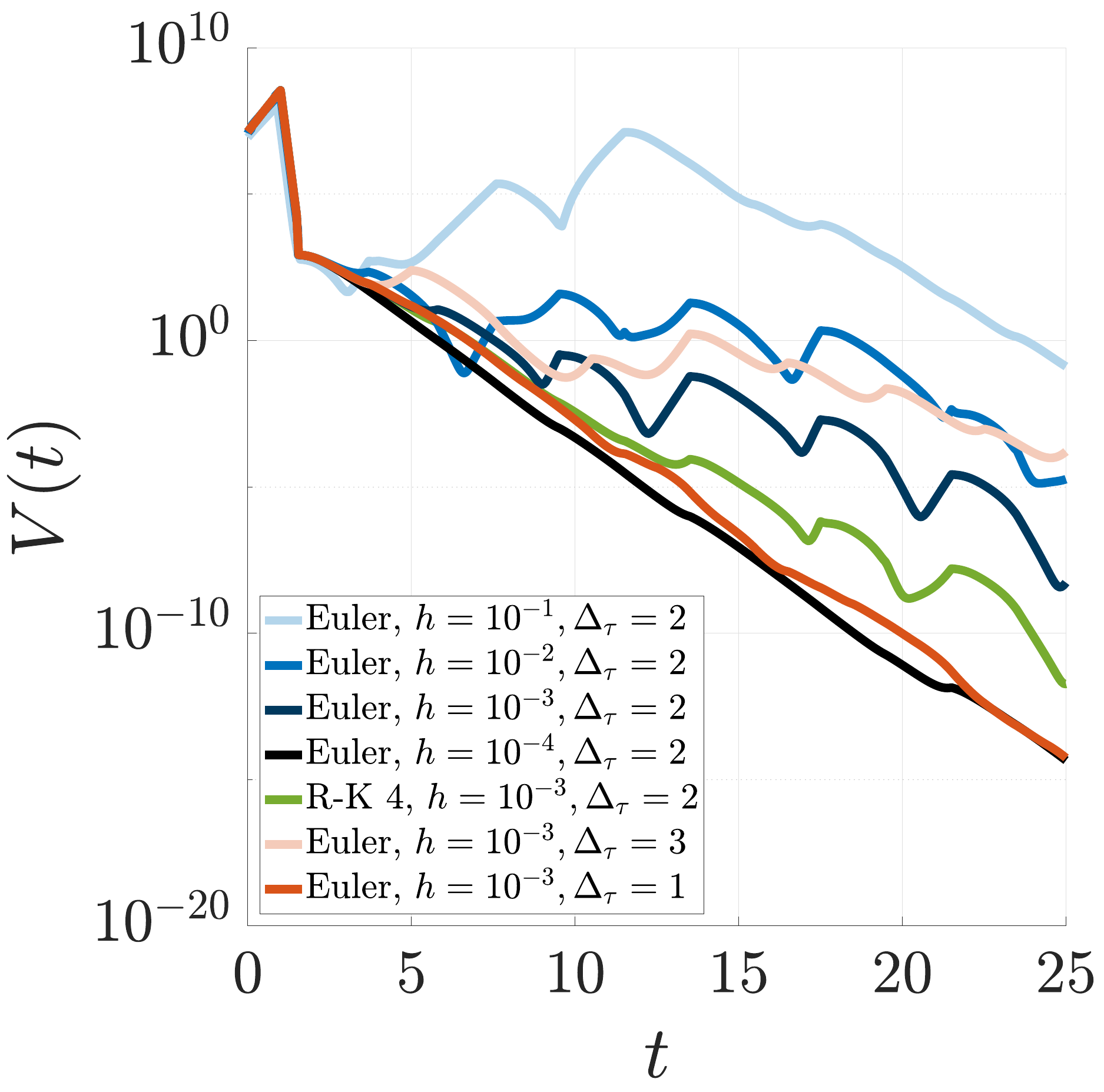}
}
\\
\subfloat[]{\small 
\includegraphics[width=0.365\linewidth]{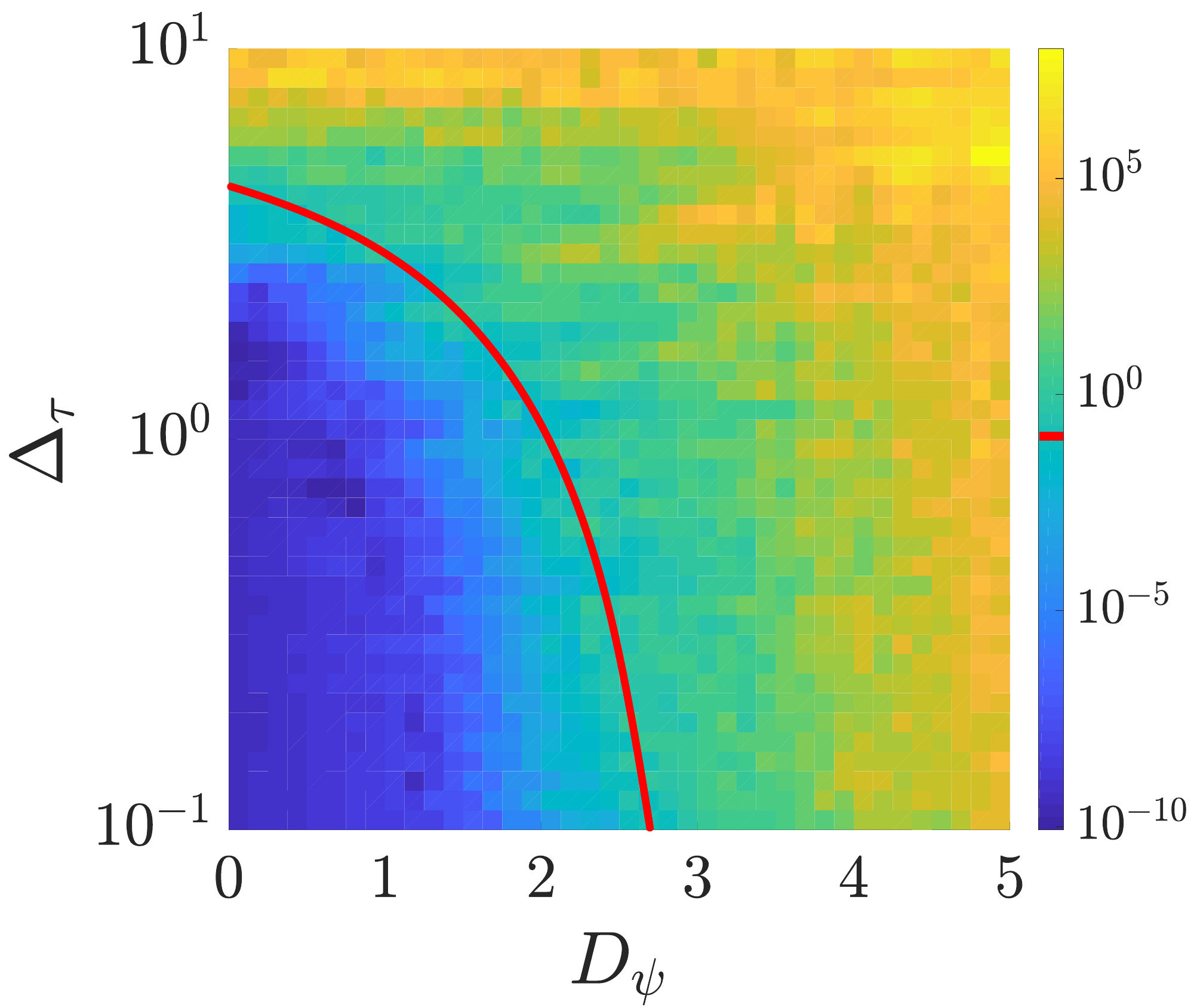}
}
\subfloat[]{\small 
\includegraphics[width=0.305\linewidth]{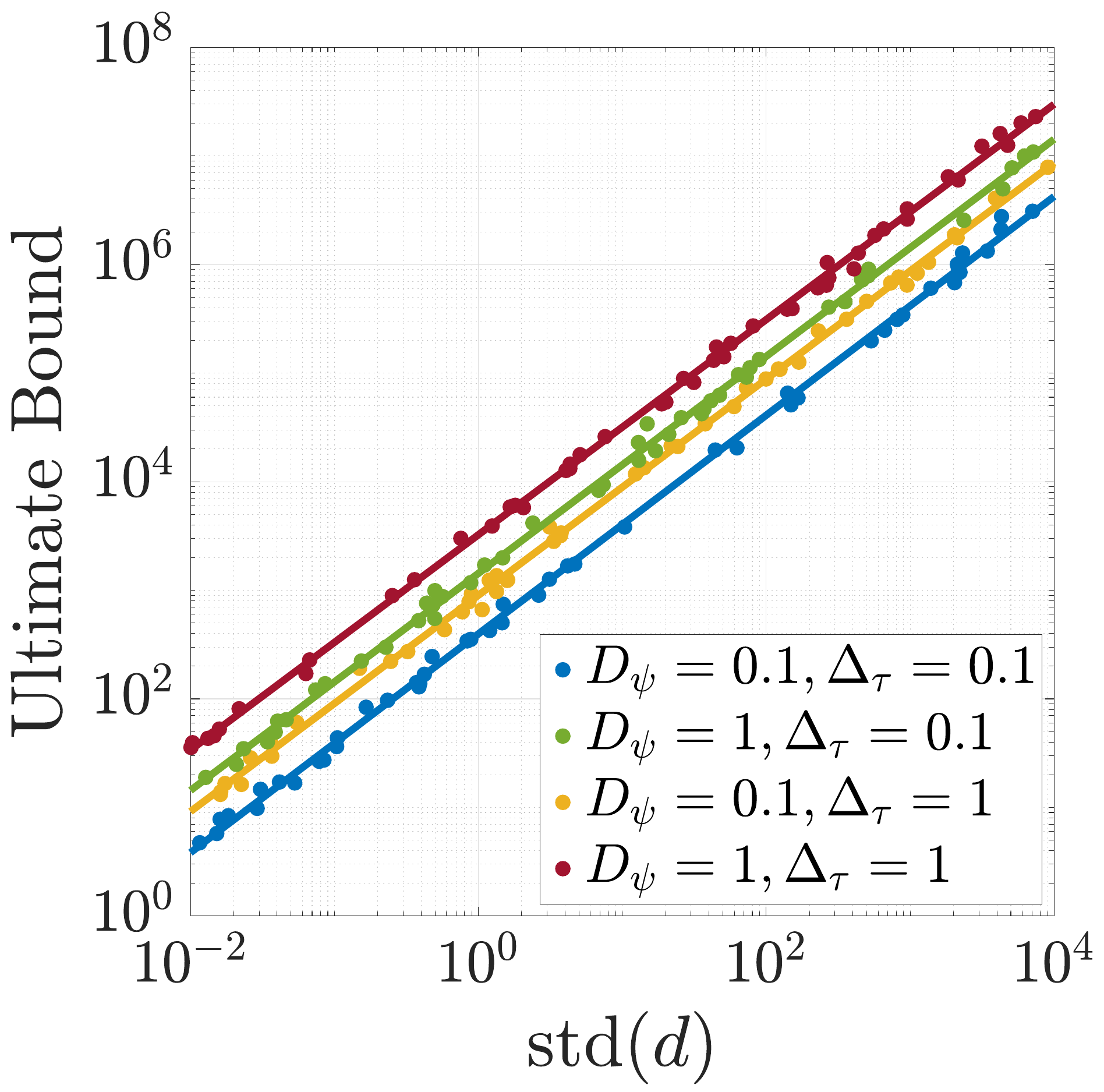}
}
\subfloat[]{\small 
\includegraphics[width=0.305\linewidth]{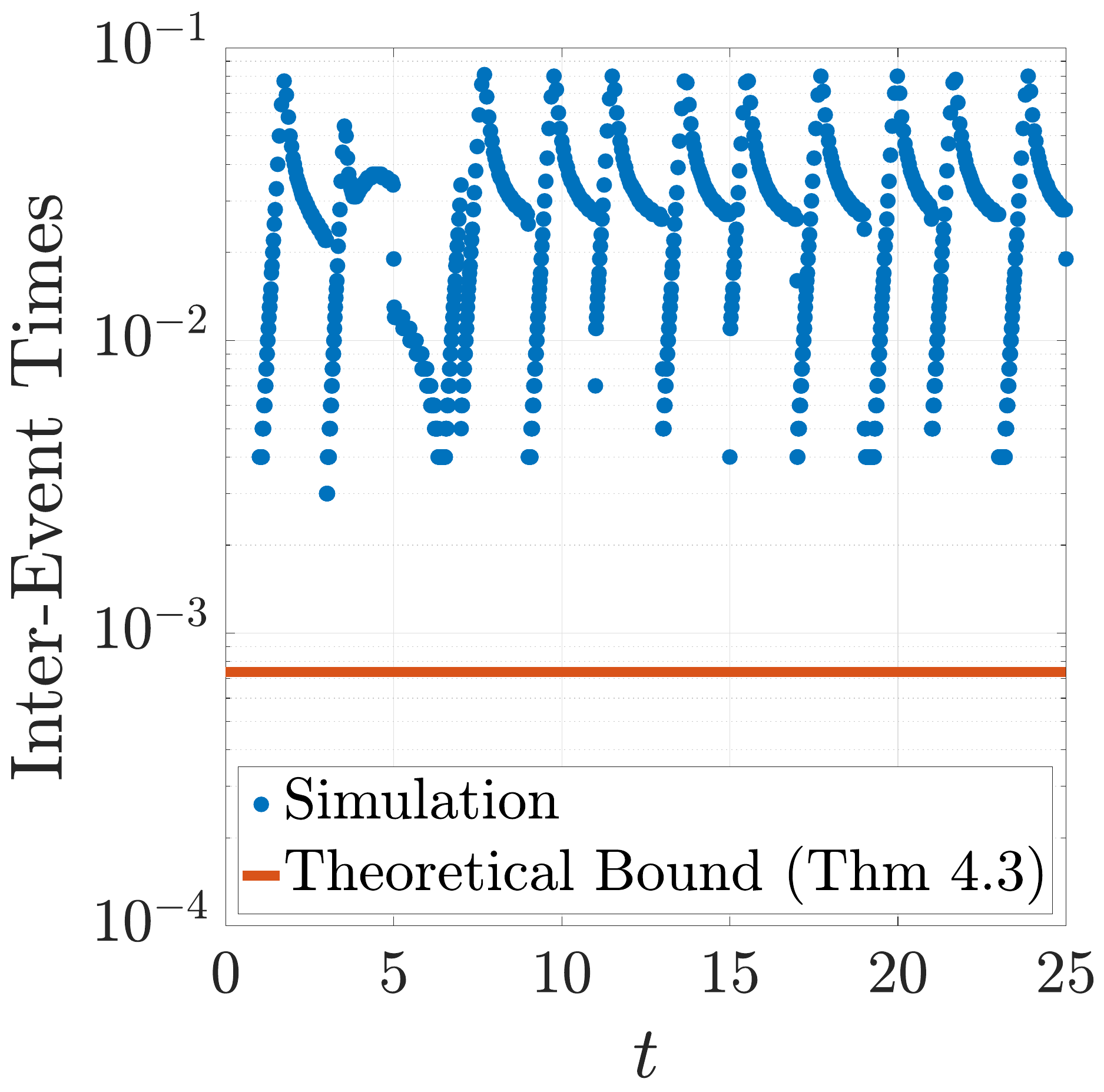}
}
\caption{\new{Simulation results for  Example~\ref{ex:1}. Unless otherwise stated, we use $x(0) = (1, 1)$, $\theta = 0.5$, $b = 10$,  $\Delta_\tau = 2$, $D_\psi = 1$, and Euler discretization with $h = 10^{-3}$. %
\textbf{(a)} Sample trajectories.  
The dotted portion of $p(t)$ corresponds to the times $[\phi(0),
\psi^{-1}(0))$ and is plotted only for illustration purposes (not used
by the controller). %
\textbf{(b)} The event-frequency and average of $|x(50)|$ over 100
random initial conditions as a function of $\bar \rho$. %
\textbf{(c)} The effect of discretization and state sampling on
stability. While stepsize $h$ and sampling rate $1/\Delta_\tau$ have a
strong impact on stability (blue and red curves, resp.), the effect of
discretization order is less significant (green curve, 4th order
Runge-Kutta). %
\textbf{(d)} Heat map of the average of $|x(25)|$ over $10$ random
initial conditions drawn from standard normal distribution. %
The red line shows an approximate border of stability.  \textbf{(e-f)}
Numerical verification of the robustness of the event-triggered
controller to additive disturbances: we augment~\eqref{eq:dynamics} as
$\dot x = f(x, u_p) + d$, where $d$ is zero-mean, white, and
Gaussian. %
\textbf{(e)} The estimate of the ultimate bound of state ($\max_{i =
  1, 2} \limsup_{t \to \infty} |x_i(t)|$) for varying standard
deviation of $d$. The value of the ultimate bound depends on sampling
delay and frequency, but the state always remains bounded for bounded
disturbances and the best linear fit always has a slope $\simeq 1$, a
behavior akin to globally input-to-state stable linear systems. %
\textbf{(f)} The inter-event times $\{t_{k + 1} - t_k\}_{k \ge 0}$ for
std($d$) = 1. Unlike~\citep{DPB-WPMHH:14}, the minimum inter-event
time is lower bounded by $\delta$ in Theorem~\ref{thm:zeno&gas}
irrespective of the existence or strength of disturbance (as long as
$\Delta_\tau > \delta$) due to the fact that sensing only occurs at
discrete-time instances $\{\tau_\ell\}$, making the controller
oblivious to disturbance over each $\Delta_\tau$ period. This may in
principle lead to instability ($|x| \to \infty$) but we see from (e)
that this is not the case. %
}}
    \label{fig:ex1}
    \vspace*{-1.5ex}
\end{figure*}

\begin{example}\longthmtitle{Non-compliant Nonlinear
    System}\label{ex:2}
  \rm Here, we consider an example that violates several of our
  assumptions.
  Let
  \new{
    \begin{align*}
      &f(x, u) = (A + \Delta A) x + B u + E x_1^3, \quad E = [0 \ \
      1]^T,
      \\
      &t - \phi(t) = D + a \sin(t), \ \tau_\ell = \ell
      \Delta_\tau, 
      \ \psi(t) = t - \frac{1 -
        e^{-t}}{2}, 
    \end{align*}
    where $A$ and $B$ are as in Example~\ref{ex:1}. The nominal delay
    $D$ and nominal coefficient matrix $A$ are known but their
    perturbations $a \sin(t)$ and $\Delta A$ are not (the controller
    \emph{assumes} $\phi(t) = t - D$ and $f(x, u) = Ax + Bu +
    Ex_1^3$).  We generate the elements of $\Delta A$ independently
    from $\Nc(0, \sigma_A^2)$.  Furthermore, in our simulation, the
    actual time that it takes for a sensor message $x(\tau_\ell)$ to
    reach the controller is \emph{not} the nominal delay
    $\psi^{-1}(\tau_\ell) - \tau_\ell$ but a random variable
    $D^\psi_\ell$, where
    \begin{align*}
      E[D^\psi_\ell] = \psi^{-1}(\tau_\ell) - \tau_\ell, \quad
      \var(D^\psi_\ell) = \sigma_\psi > 0.
    \end{align*}
    This serves to illustrate how the delay function $\psi$ (and
    similarly $\phi$), though being continuous and deterministic in
    our treatment, can be used to compensate for (in addition to
    physical sensor lag) computation and communication delays that are
    discrete and stochastic in nature%
    \footnote{Since the triggering times $\tau_\ell$ are themselves
      random and vary from execution to execution, the function $\psi$
      is defined for all $t$ even though only the discrete sequence
      $\{\psi^{-1}(\tau_\ell)\}$ is relevant for each execution.}.  }

  Moreover, 
  $K(x) = -6x_1 - 5 x_2 - x_1^3$ makes the closed-loop system ISS but
  is not globally Lipschitz, and the zero-input system exhibits finite
  escape time.
  The simulation results of this example are illustrated in
  Figure~\ref{fig:ex2}. It can be seen that although $V$ is
  significantly non-monotonic, the event-triggered controller is able
  to stabilize the system. While a thorough investigation of the
  stability of the resulting stochastic dynamical system reaches far
  beyond our theoretical guarantees, this example suggests that the
  proposed controller is robust to small violations of its assumptions
  and is thus applicable to a wider class of systems than those
  satisfying Assumption~\ref{assum}.  \oprocend
\end{example}

\begin{figure}
  \centering
\includegraphics[width=\linewidth]{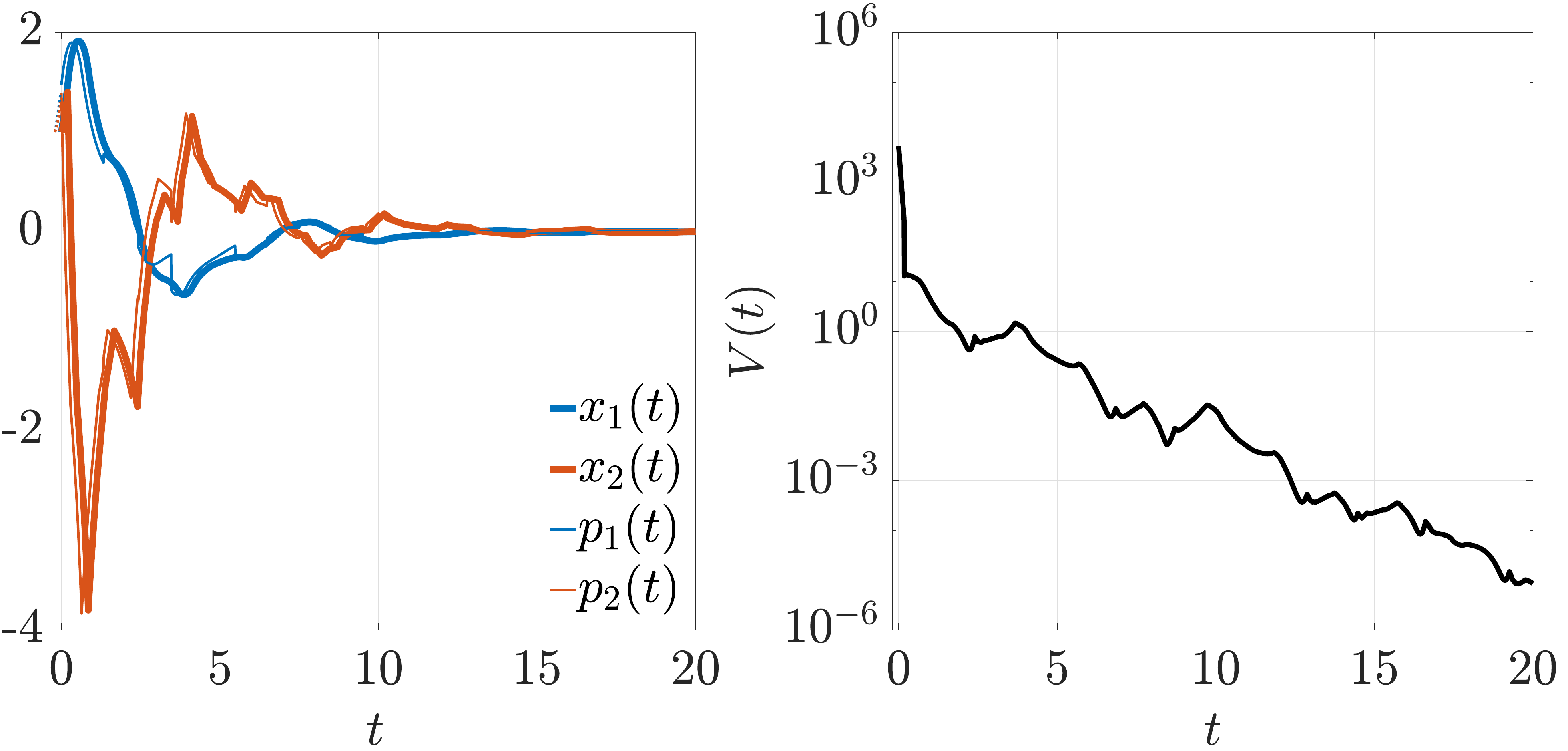}\ 
\caption{\new{Simulation of the non-compliant system in
    Example~\ref{ex:2}. We have used $x(0) = (1, 1)$, $\theta = 0.5$,
    $b = 10$, $a = 0.01$, $D = 0.2$, $\Delta_\tau = 1$, $\mu_\psi =
    0.1$, $\sigma_\psi = \sigma_A = 0.02$, triggering condition
    $|e(t)| \le 0.5 |p(t)|$, and Euler discretization of the
    continuous-time dynamics with $h = 10^{-2}$.}}
  \label{fig:ex2}
\end{figure}

\section{Conclusions and Future Work}

We have proposed a prediction-based event-triggered control scheme for
the stabilization of nonlinear systems with sensing and actuation
delays.  Assuming known time delay, globally-Lipschitz input-to-state
stabilizability, and state feedback, we have shown that the
closed-loop system is globally asymptotically stable and the
inter-event times are uniformly lower bounded. We have specialized our
results for linear systems, providing explicit expressions for our
design and analysis steps, and further studied the
sampling-convergence trade-off characteristic of event-triggered
strategies. Finally, we have addressed the numerical challenges that
arise in the computation of predictor feedback and demonstrated the
effectiveness of our approach in simulation. Regarding future work, we
highlight the extension of our results to systems with disturbances,
unknown input delays, or output feedback, the characterization of the
robustness properties resulting from incorporating the most recently
available state information, the relaxation of the global Lipschitz
requirement on the input-to-state stabilizer, and the study of the
effect on performance of the numerical implementation of the
event-triggered controller.

\section*{Acknowledgments}
This work was supported by NSF Award CNS-1446891 (EN, PT, and JC) and
AFOSR Award FA9550-15-1-0108 (JC).

{
\small

}

\end{document}